\newtheorem{theorem}{Theorem}
\newtheorem{lemma}[theorem]{Lemma}
\newtheorem{problem}[theorem]{Problem}
\newtheorem{definition}[theorem]{Definition}
\newenvironment{ProofDummyEnv}{}{}
\QED\end{ProofDummyEnv}}
\newcommand{\hide}[1]{}
\newcommand{\QED}{\nopagebreak\hfill $\Box$}
\newcommand{\Ex}[1]{E\!\left[#1\right]}
\providecommand{\abs}[1]{\lvert#1\rvert}
\providecommand{\norm}[1]{\left\lVert#1\right\rVert}
\providecommand{\snorm}[1]{\bigl\lVert#1\bigr\rVert}
\newcommand{\mnorm}[1]{%
  \vert\kern-0.9pt\vert\kern-0.9pt\vert#1
  \vert\kern-0.9pt\vert\kern-0.9pt\vert
}
\newcommand{\vareps}{\varepsilon}
 \DeclareMathOperator{\spn}{span}
\def\enumproof{\rm \trivlist \item[\hskip \labelsep{\bf Proof. }]
\begin{enumerate}}
\def\endenumproof{\QED \end{enumerate} \endtrivlist}
\def\descproof{\rm \trivlist \item[\hskip \labelsep{\bf Proof. }]
\begin{description}}
\def\enddescproof{\QED \end{description} \endtrivlist}
\renewcommand{\Pr}[1]{
{\rm Pr}\left[{#1}\right] }
\newcommand{\row}[2]{\ensuremath{{#1}_{#2 \star}}}
\newcommand{\col}[2]{\ensuremath{{#1}_{\star #2}}}
\newcommand{\Sa}{\ensuremath{S}}
\newcommand{\Sb}{\ensuremath{T}}
\long\def\symbolfootnote[#1]#2{\begingroup%
\def\thefootnote{\fnsymbol{footnote}}\footnote[#1]{#2}\endgroup}
\definecolor{adg}{rgb}{0.5,0.60,0.9}
\definecolor{frankcolor}{rgb}{0.15,0.7,0.35}
\newcommand{\outline}[1]{}
\newcommand{\cnv}{\boldsymbol{\mu}}
\newcommand{\eps}{\epsilon}
\def\vareps{\varepsilon}
\newcommand{\Real}{\mathbb{R}}
\newcommand{\A}{A}
\newcommand{\U}{U}
\newcommand{\omt}[1]{}
\newcommand{\cn}[1]{\cnv_{1}}
\def\OPT{\text{\sc opt}}
\newcommand{\para}[1]{\smallskip \noindent {\bf #1.}\ }
\begin{document}

\title{\Large \bf Sampling Algorithms and Coresets for $\ell_p$ Regression}
\author{
Anirban Dasgupta
\thanks{ Yahoo! Research, 701 First Ave., Sunnyvale, CA 94089.
Email: \{anirban, ravikumar, mahoney\}@yahoo-inc.com }
\and
Petros Drineas
\thanks{ Computer Science, Rensselaer Polytechnic Institute, Troy, NY 12180.
Work done while the author was visiting Yahoo! Research.
Email: drinep@cs.rpi.edu }
\and
Boulos Harb
\thanks{ Computer Science, University of Pennsylvania,
Philadelphia, PA 19107.
Work done while the author was visiting Yahoo! Research.
Email: boulos@cis.upenn.edu }
\and Ravi Kumar
\footnotemark[1]
\and Michael W. Mahoney
\footnotemark[1]
}

\date{}
\maketitle
%%\begin{center}
%%{\bf
%%\emph{Note:  This is a rough draft, from \today.  Please do not distribute.}
%%}
%%\end{center}

\begin{abstract}
The \emph{$\ell_p$ regression problem} takes as input a matrix 
$A \in \Real^{n \times d}$, a vector $b \in \Real^n$, and a number 
$p \in [1,\infty)$, and it returns as output a number ${\cal Z}$ and a vector
$x_{\OPT} \in \Real^d$ such that
${\cal Z} = \min_{x \in \Real^d} \norm{{\A}x -b}_p = \norm{{\A}x_{\OPT}-b}_p$.
In this paper, we construct coresets and obtain an efficient two-stage
sampling-based approximation algorithm for the very overconstrained 
($n \gg d$) version of this classical problem, for all 
$\mbox{$p \in [1, \infty)$}$.
The first stage of our algorithm non-uniformly samples 
$\hat{r}_1 = O(36^p d^{\max\{p/2+1, p\}+1})$ rows of $A$ and the corresponding
elements of $b$, and then it solves the $\ell_p$ regression problem on the 
sample; we prove this is an $8$-approximation.
The second stage of our algorithm uses the output of the first stage to 
resample $\hat{r}_1/\epsilon^2$ constraints, and then it solves the
$\ell_p$ regression problem on the new sample; we prove this is a 
$(1+\epsilon)$-approximation.
Our algorithm unifies, improves upon, and extends the
existing algorithms for special cases of $\ell_p$ regression, namely
$p = 1,2$~\cite{clarksonL1, DM06}.
In course of proving our result, we develop two concepts---well-conditioned 
bases and subspace-preserving sampling---that 
are of independent interest.
\end{abstract}

\thispagestyle{empty}
\setcounter{page}{0}

\newpage

\section{Introduction}
\label{sxn:intro}
\vspace{-0.5cm}

An important question in algorithmic problem solving is whether there exists
a \emph{small} subset of the
input such that if computations are performed only on this subset, then
the solution to the given problem can be \emph{approximated} well.
Such a subset is often known as a \emph{coreset} for the problem.
The concept of coresets has
been extensively used in solving many problems in
optimization and computational geometry;
e.g., see the excellent survey by Agarwal, Har-Peled, and
Varadarajan \cite{coresetsurvey}.

In this paper, we construct coresets and obtain efficient sampling algorithms
for the classical $\ell_p$ regression problem, for all 
$\mbox{$p \in [1, \infty)$}$.
Recall the \emph{$\ell_p$ regression problem}:
\vspace{-0.25cm}
\begin{problem}[\emph{$\ell_p$ regression problem}]
\label{prob:intro:lpregression}
Let $\norm{\cdot}_p$ denote the $p$-norm of a vector.
Given as input a matrix ${\A} \in \Real^{n \times m}$, a target vector
$b \in \Real^n$, and a real number $p \in [1,\infty)$, find a vector
$x_{\OPT}$ and a number ${\cal Z}$ such that
\begin{equation}
\label{eqn:original_prob}
{\cal Z} = \min_{x \in \Real^m} \norm{{\A}x -b}_p
         = \norm{{\A}x_{\OPT} -b}_p  .
\end{equation}
\end{problem}
\noindent
%%MM%% THE FOLLOWING THAT RK DELETED FOLLOWED BEFORE WE MADE PROBLEM EXPLICIT.
%%RK%% , defined in Section~\ref{sxn:review}.
In this paper, we will use the following \emph{$\ell_p$ regression coreset} 
concept:
\vspace{-0.25cm}
\begin{definition}[\emph{$\ell_p$ regression coreset}]
Let $0 < \epsilon < 1$.  
A \emph{coreset} for Problem~\ref{prob:intro:lpregression} is a set of 
indices $\mathcal{I}$ such that  the solution $\hat{x}_{\OPT}$ to
$\min_{x\in\Real^m}\snorm{\hat{A}x - \hat{b}}_p$,
where $\hat{A}$ is composed of those rows of $A$ whose indices are in
$\mathcal{I}$ and $\hat{b}$ consists of the corresponding elements of $b$,
satisfies $\norm{A\hat{x}_{\OPT} - b}_p \le (1+\epsilon)\min_x\norm{Ax - b}_p$.
\end{definition}
\noindent
If $n \gg m$, i.e., if there are many more constraints than variables, then 
(\ref{eqn:original_prob}) is an {\em overconstrained $\ell_p$ regression 
problem}. In this case, there does not in general exist a vector $x$ 
such that ${\A}x=b$, and thus ${\cal Z} > 0$. Overconstrained
regression problems are fundamental in statistical data analysis and
have numerous applications in applied mathematics, data mining, and
machine learning \cite{hast-tibs-fried,regression-analysis}.
%%RK%% ; and there has been extensive work on matrix
%%RK%% decompositions that improve both the speed and numerical stability of
%%RK%% algorithms for their solution.
Even though convex programming methods can be used to solve the
overconstrained regression problem in time $O((mn)^c)$, for $c > 1$,
this is prohibitive if $n$ is large.%
\footnote{For the special case of $p=2$, vector space methods can solve the 
regression problem in time $O(m^2n)$, and if $p=1$ linear programming methods 
can be used.}
This raises the natural
question of developing more efficient algorithms that run in time
$O(m^cn)$, for $c>1$, while possibly relaxing the solution to
Equation (\ref{eqn:original_prob}).
In particular: Can we get a $\kappa$-approximation to the $\ell_p$
regression problem, i.e., a vector $\hat{x}$ such that
$\norm{{\A}\hat{x} - b}_p \leq \kappa {\cal Z}$, where $\kappa > 1$?
Note that a coreset of small size would strongly satisfy our requirements 
and result in an efficiently computed solution that's almost as good as
the optimal.
Thus, the question becomes: Do coresets exist for the $\ell_p$ regression 
problem, and if so can we compute them efficiently?

Our main result is an efficient two-stage sampling-based approximation
algorithm that constructs a coreset and thus achieves a 
$(1+\epsilon)$-approximation for the $\ell_p$ regression problem.
The first-stage of the
algorithm is sufficient to obtain a (fixed) constant factor approximation.
The second-stage of the algorithm carefully uses the output of the 
first-stage to construct a coreset and achieve arbitrary constant factor
approximation.

\vspace{-0.50cm}
\subsection{Our contributions}
\label{sxn:intro:main_results}
\vspace{-0.25cm}

\para{Summary of results}
For simplicity of presentation, we summarize the results for the
case of $m = d = \mbox{rank}(A)$. Let $k = \max\{p/2+1,\ p\}$ and
let $\phi(r,d)$ be the time required to solve an $\ell_p$ regression
problem with $r$ constraints and $d$ variables. In the first stage
of the algorithm, we compute a set of sampling probabilities $p_1,
\ldots, p_n$ in time $O(nd^5 \log n)$, sample $\widehat{r_1}=O(36^p
d^{k+1})$ rows of ${\A}$ and the corresponding elements of $b$
according to the $p_i$'s, and solve an $\ell_p$ regression problem
on the (much smaller) sample; we prove this is an $8$-approximation algorithm 
with a running time of $O\left(nd^5\log n+\phi(\widehat{r_1},d)\right)$. In
the second stage of the algorithm, we use the residual from the
first stage to compute a new set of sampling probabilities $q_1,
\ldots, q_n$, sample additional $\widehat{r_2} =O(
\widehat{r_1}/\epsilon^2)$ rows of ${\A}$ and the corresponding
elements of $b$ according to the $q_i$'s, and solve an $\ell_p$
regression problem on the (much smaller) sample; we prove this is a
$(1+\epsilon)$-approximation algorithm with a total running time of
$O\left(nd^5\log n+\phi(\widehat{r_2},d)\right)$ (Section
\ref{sxn:main_algorithm}). 
%%LB%% Besides showing an $\Omega(d/\epsilon^2)$ lower
%%LB%% bound on the number of samples for the regression problem (Section
%%LB%% \ref{sxn:lb}), 
We also show how to extend our basic algorithm to
commonly encountered and more general settings of constrained,
generalized, and weighted $\ell_p$ regression problems (Section
\ref{sxn:extensions}).

We note that the $l_p$ regression problem for $p=1,2$ has been 
studied before. For $p=1$, Clarkson~\cite{clarksonL1} uses a 
subgradient based algorithm to preprocess $A$ and $b$ and then 
samples the rows of the modified problem; these elegant techniques 
however depend crucially on the linear structure of the $l_1$ 
regression problem\footnote{Two ingredients of \cite{clarksonL1} use 
the linear structure: the subgradient based preprocessing itself, 
and the counting argument for the concentration bound.}.  
Furthermore, this algorithm does not yield coresets. For $p = 2$, 
Drineas, Mahoney, and Muthukrishnan~\cite{DM06} construct coresets 
by exploiting the singular value decomposition, a property peculiar 
to the $l_2$ space. Thus in order to efficiently compute coresets 
for the $\ell_p$ regression problem for all $p\in [1,\infty)$, we 
need tools that capture the geometry of $l_p$ norms. In this paper 
we develop the following two tools that may be of independent
interest (Section \ref{sxn:main_technical}).
 
%%%ABOVE%%%Our algorithm unifies, improves upon, and extends the
%%%ABOVE%%%existing algorithms for special cases of $\ell_p$ regression, namely
%%%ABOVE%%%$p = 1,2$.
%%%ABOVE%%%For example, for $p = 1$, it improves the sampling complexity of the result 
%%%ABOVE%%%of Clarkson~\cite{clarksonL1} from $O(d^{3.5}\log d /\epsilon^2)$ to 
%%%ABOVE%%%$O(d^{2.5}/\epsilon^2)$;
%%%ABOVE%%%and, for $p=2$, it matches the sampling complexity of the
%%%ABOVE%%%result of Drineas, Mahoney, and Muthukrishnan~\cite{DM06}, 
%%%ABOVE%%%which is $O(d^2 /\epsilon^2)$.
%%%ABOVE%%%Our result also implicitly establishes the existence of an efficiently
%%%ABOVE%%%computable coreset for the overconstrained $\ell_p$ regression problem.
%%%ABOVE%%%
%%%ABOVE%%%\para{Technical highlights}
%%%ABOVE%%%%
%%%ABOVE%%%In course of proving our result, we develop two technical concepts and
%%%ABOVE%%%tools that may be of independent interest
%%%ABOVE%%%(Section \ref{sxn:main_technical}).

\smallskip
\noindent (1) \emph{Well-conditioned bases.}\ Informally speaking,
if $U$ is a well-conditioned basis, then for all $z \in \Real^d$,
$\norm{z}_p$ should be close to $\norm{\U z}_p$.
We will formalize this by requiring that for all $z\in\Real^d$, $\norm{z}_q$ 
multiplicatively approximates $\norm{Uz}_p$ by a factor that can depend on $d$ 
but is \emph{independent} of $n$ (where $p$ and $q$ are conjugate; i.e., 
$q = p/(p-1)$).
We show that these bases exist and can be constructed in time $O(nd^5\log n)$. 
In fact, our notion of a well-conditioned basis can
be interpreted as a computational analog of the Auerbach and Lewis
bases studied in functional analysis \cite{bsa}.  
They are also related to the barycentric spanners recently introduced by 
Awerbuch and R. Kleinberg \cite{kleinberg-barycentric}
(Section~\ref{sxn:auerbach}). 
J. Kleinberg and Sandler~\cite{KS04} defined the notion of an 
$\ell_1$-independent basis, and our well-conditioned basis can be used to 
obtain an exponentially better ``condition number'' than their construction.
Further, Clarkson~\cite{clarksonL1} defined the notion of an 
``$\ell_1$-conditioned matrix,'' and he preprocessed the input matrix to an 
$\ell_1$ regression problem so that it satisfies conditions similar to those 
satisfied by our bases.

%Well-conditioned bases have two applications beyond the main
%theme of this paper.  Firstly, using well-conditioned bases,
%one can provide a constructive analog of
%Schechtman's existential $L_1$ embedding result~\cite{schechtman}
%(see also~\cite{BLM89})---any $d$-dimensional subspace of
%$L_1[0,1]$ can be embedded in $\ell_1^r$ with distortion
%$(1+\epsilon)$ with $r = O(d^2/\epsilon^2)$---albeit with an
%extra factor of $\sqrt{d}$ in the sampling complexity.
\smallskip 
\noindent (2) \emph{Subspace-preserving sampling.}\ We show that sampling rows
of ${\A}$ according to information in the rows of a well-conditioned
basis of $A$ minimizes the sampling variance and consequently, the
rank of $A$ is not lost by sampling. This is critical for our
relative-error approximation guarantees. The notion of
subspace-preserving sampling was used in~\cite{DM06} for $p=2$, 
but we abstract and generalize this concept for all $p\in[1,\infty)$.

We note that for $p=2$, our sampling complexity matches that 
of~\cite{DM06}, which is $O(d^2 /\epsilon^2)$; and for $p = 1$, it 
improves that of~\cite{clarksonL1} from $O(d^{3.5}(\log d 
)/\epsilon^2)$ to $O(d^{2.5}/\epsilon^2)$.

\para{Overview of our methods}
Given an input matrix $A$, we first construct a well-conditioned basis
for $A$ and use that to obtain bounds on a slightly non-standard
notion of a $p$-norm condition number of a matrix.
The use of this particular condition number is crucial since the variance
in the subspace preserving sampling can be upper bounded in terms of it.
An $\varepsilon$-net argument then shows that the first stage sampling gives
us a $8$-approximation.
The next twist is to use the output of the first stage as a feedback
to fine-tune the sampling probabilities.  This is done so that the
``positional information'' of $b$ with respect to $A$ is also
preserved in addition to the subspace.  A more careful use of a
different $\varepsilon$-net shows that the second
stage sampling achieves a $(1+\epsilon)$-approximation.

% coordinate restriction embedding

\vspace{-0.50cm}
\subsection{Related work}
\label{sxn:intro:prior_work}
\vspace{-0.25cm}

As mentioned earlier, in course of providing a sampling-based approximation 
algorithm for $\ell_1$ regression, Clarkson~\cite{clarksonL1} shows that 
coresets exist and can be computed efficiently for a \emph{controlled} 
$\ell_1$ regression problem. 
Clarkson first preprocesses the input matrix $A$ to make it well-conditioned 
with respect to the $\ell_1$ norm then applies a subgradient-descent-based 
approximation algorithm to guarantee that the $\ell_1$ norm of the target 
vector is conveniently bounded. 
Coresets of size $O(d^{3.5}\log d/\epsilon^2)$ are thereupon exhibited for 
this modified regression problem.
For the $\ell_2$ case, Drineas, Mahoney and Muthukrishnan~\cite{DM06}
designed sampling strategies to preserve the subspace
information of ${\A}$ and proved the existence of a coreset of rows of size
$O(d^2/\eps^2)$---for the \emph{original} $\ell_2$ regression problem; this 
leads to a $(1+\epsilon)$-approximation algorithm.
While their algorithm used $O(nd^2)$ time to construct the coreset and solve
the $\ell_2$ regression problem---which is 
sufficient time to solve the regression problem---in a subsequent work,
Sarl\'{o}s~\cite{Sarlos06} improved the running time for solving the 
regression problem to $\tilde{O}(nd)$ by using random sketches based on the 
Fast Johnson--Lindenstrauss transform of Ailon and Chazelle~\cite{AC06}.

%%RK%% He used the notion of Auerbach basis for a Banach space, which is a set of
%%RK%% basis vectors that are normalized with respect to an $\ell_p$ norm and are
%%RK%% ``well-behaved'' with respect to that norm.
%
%%RK%% A related notion of bases, the \emph{barycentric spanners}, was recently
%%RK%% developed by Awerbuch and R. Kleinberg~\cite{kleinberg-barycentric} and was
%%RK%% used in the online learning setting in order to represent the compact set of
%%RK%% strategies in terms of a ``well-spaced" basis.
%%RK%% They also gave an algorithm for computing
%%RK%% \emph{approximate barycentric spanners} for a compact set $S$, assuming there
%%RK%% is an oracle that can maximize linear functions over $S$.
%%RK%% Through our construction of $p$-well-conditioned bases, we give approximate
%%RK%% algorithms for constructing barycentric spanners of compact sets $S$ that
%%RK%% result as linear transformations of the unit ball of any norm.

More generally, embedding $d$-dimensional subspaces of $L_p$ into
$\ell_p^{f(d)}$ using coordinate restrictions has been extensively
studied \cite{schechtman,BLM89,talagrand90,talagrand95,Zavitch}.
Using well-conditioned bases, one can provide a constructive analog
of Schechtman's existential $L_1$ embedding result~\cite{schechtman}
(see also~\cite{BLM89}), that any $d$-dimensional subspace of
$L_1[0,1]$ can be embedded in $\ell_1^r$ with distortion
$(1+\epsilon)$ with $r = O(d^2/\epsilon^2)$, albeit with an extra
factor of $\sqrt{d}$ in the sampling complexity.
Coresets have been analyzed by the computation geometry community as a tool 
for efficiently approximating various extent 
measures~\cite{AHV04,coresetsurvey}; see 
also~\cite{coreset-kmeans,coreset-ball,FFS06} for applications of coresets in 
combinatorial optimization. An important difference is that most of
the coreset constructions are exponential in the dimension, and thus
applicable only to low-dimensional problems, whereas our coresets
are polynomial in the dimension, and thus applicable to
high-dimensional problems.

\vspace{-0.75cm}
\section{Preliminaries}
\label{sxn:review}
\vspace{-0.5cm}

Given a vector $x \in \Real^m$, its \emph{$p$-norm} is
$\norm{x}_p= \sum_{i=1}^{m} (\abs{x_i}^p)^{1/p}$, and the \emph{dual norm} of
$\norm{\cdot}_p$ is denoted $\norm{\cdot}_q$, where $1/p + 1/q =1$.
Given a matrix $A \in \Real^{n \times m}$, its \emph{generalized $p$-norm} is
$\mnorm{A}_p = (\sum_{i=1}^{n} \sum_{j=1}^{m} \abs{A_{ij}}^p)^{1/p}$.
This is a submultiplicative matrix norm that generalizes the Frobenius norm
from $p=2$ to all $p \in [1,\infty)$, but it is not a vector-induced matrix
norm.
The $j$-th column of $A$ is denoted $\col{A}{j}$, and the $i$-th row is
denoted $\row{A}{i}$.
In this notation,
$\mnorm{A}_p = (\sum_j\norm{\col{A}{j}}_p^p)^{1/p}
             = (\sum_i\norm{\row{A}{i}}_p^p)^{1/p}$.
For $x, x', x'' \in \Real^m$, it can be shown using H\"{o}lder's inequality
that
$\norm{x-x'}_p^p \le 2^{p-1} \left(\norm{x-x''}_p^p+\norm{x''-x'}_p^p\right)$.

Two crucial ingredients in our proofs are $\vareps$-nets and tail-inequalities.
A subset $\mathcal{N}(D)$ of a set $D$ is called an \emph{$\varepsilon$-net}
in $D$ for some $\varepsilon > 0$ if for every $x\in D$, there is a
$y\in \mathcal{N}(D)$ with $\norm{x-y} \le \varepsilon$.
In order to construct an $\vareps$-net for ${D}$ it is enough to choose
$\mathcal{N}(D)$ to be the maximal set of points that are pairwise $\vareps$
apart.
It is well known that the unit ball of a $d$-dimensional space has an
$\vareps$-net of size at most $(3/\vareps)^d$~\cite{BLM89}.

Finally, throughout this paper, we will use the following sampling matrix
formalism to represent our sampling operations.
Given a set of $n$ probabilities, $p_i \in (0,1]$, for $i=1,\ldots,n$, let
$S$ be an $n \times n$ diagonal sampling matrix such that $S_{ii}$ is set
to $1/p_i^{1/p}$ with probability $p_i$ and to zero otherwise.
Clearly, premultiplying ${\A}$ or $b$ by $S$ determines whether the $i$-th
row of ${\A}$ and the corresponding element of $b$ will be included in the
sample, and the expected number of rows/elements selected is
$r' = \sum_{i=1}^n p_i$.
(In what follows, we will abuse notation slightly by ignoring zeroed out rows
and regarding $S$ as an $r' \times n$ matrix and thus $SA$ as an
$r' \times m$ matrix.)
Thus, e.g., sampling constraints from Equation~\eqref{eqn:original_prob} and
solving the induced subproblem may be represented as solving
\begin{equation}
\label{sample_problem}
\hat{{\cal Z}} = \min_{\hat{x} \in \Real^m} \norm{S{\A}\hat{x}-Sb}_p .
\end{equation}
A vector $\hat{x}$ is said to be a \emph{$\kappa$-approximation} to the
$\ell_p$ regression problem of Equation~\eqref{eqn:original_prob}, for
$\kappa \geq 1$, if $\norm{{\A}\hat{x} - b}_p \leq \kappa {\cal Z}$.
Finally, the Appendix contains all the missing proofs.

\vspace{-0.5cm}
\section{Main technical ingredients}
\label{sxn:main_technical}

%%RK%% Before presenting our main result, we present in this section two
%%RK%% important technical pieces.

\vspace{-0.25cm}
\subsection{Well-conditioned bases}
\label{sxn:auerbach}
\vspace{-0.25cm}

We introduce the following notion of a ``well-conditioned'' basis.

\begin{definition}[Well-conditioned basis]
\label{def:good_basis}
Let ${\A}$ be an $n\times m$ matrix of rank $d$, let $p \in [1,\infty)$, and
let $q$ be its dual norm.
Then an $n \times d$ matrix ${\U}$ is an
\emph{$(\alpha,\beta,p)$-well-conditioned basis} for the column space of
${\A}$ if
%%RK%% \begin{itemize}
%%RK%% \item[-]
(1) $\mnorm{{\U}}_p \le \alpha$, and
%%RK%% \item[-]
(2) for all $z\in\Real^d$, $\norm{z}_q \le \beta \norm{\U z}_p$.
%%RK%% \end{itemize}
We will say that ${\U}$ is a \emph{$p$-well-conditioned basis} for the column
space of ${\A}$ if $\alpha$ and $\beta$ are $d^{O(1)}$,
independent of $m$ and $n$.
\end{definition}

\noindent
Recall that any orthonormal basis ${\U}$ for $\spn({\A})$ satisfies both
$\mnorm{{\U}}_2 = \norm{{\U}}_F = \sqrt{d}$ and also
$\norm{z}_2 = \norm{\U z}_2$ for all $z\in \Real^d$, and thus is a
$(\sqrt{d},1,2)$-well-conditioned basis.
Thus, Definition~\ref{def:good_basis} generalizes to an arbitrary $p$-norm,
for $p \in [1,\infty)$, the notion that an orthogonal matrix is
well-conditioned with respect to the $2$-norm.
Note also that duality is incorporated into Definition~\ref{def:good_basis}
since it relates the $p$-norm of the vector $z\in \Real^d$ to the $q$-norm of
the vector $\U z\in \Real^n$, where $p$ and $q$ are dual.%
\footnote{For $p=2$, Drineas, Mahoney, and Muthukrishnan used this basis, 
i.e., an orthonormal matrix, to construct probabilities to sample the original 
matrix.  For $p=1$, Clarkson used a procedure similar to the one we describe 
in the proof of Theorem~\ref{thm:good_basis} to preprocess $A$ such that the 
$1$-norm of $z$ is a $d\sqrt{d}$ factor away from the $1$-norm of $Az$.}

The existence and efficient construction of these bases is given by
the following.

\begin{theorem}
\label{thm:good_basis}
Let ${\A}$ be an $n\times m$ matrix of rank $d$, let $p \in [1,\infty)$, and
let $q$ be its dual norm.
Then there exists an \emph{$(\alpha,\beta,p)$-well-conditioned basis} ${\U}$
for the column space of ${\A}$ such that:
%%RK%% \begin{itemize}
%%RK%% \item[-]
if $p<2$, then $\alpha = d^{\frac1p + \frac12}$
                    and $\beta = 1$,
%%RK%% \item[-]
if $p=2$, then $\alpha = d^{\frac12}$
                    and $\beta = 1$, and
%%RK%% \item[-]
if $p>2$, then $\alpha = d^{\frac1p + \frac12}$
                    and $\beta = d^{\frac1q - \frac12}$.
%%RK%% \end{itemize}
Moreover, ${\U}$ can be computed in $O(nmd+nd^5\log n)$ time (or in just
$O(nmd)$ time if $p=2$).
\end{theorem}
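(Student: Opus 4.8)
The plan is to reduce everything to John's theorem on inscribed ellipsoids, after one preliminary orthonormalization; the constants $\alpha,\beta$ will then fall out of the John inclusions together with elementary $\ell_q$-versus-$\ell_2$ inequalities on $\Real^d$. The case $p=2$ is immediate: compute (by Gram--Schmidt, or a QR-type factorization, in $O(nmd)$ time) an $n\times d$ matrix $Q$ with orthonormal columns spanning the column space of $A$ --- this also disposes of the rank issue $d\le m$; then $\mnorm{Q}_2=\norm{Q}_F=\sqrt d$ and $\norm{Qz}_2=\norm{z}_2$ for all $z$, so $Q$ is a $(\sqrt d,1,2)$-well-conditioned basis, as claimed.

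Assume henceforth $p\neq2$, and begin by computing such a $Q$ in $O(nmd)$ time. Since $Q$ has full column rank, $z\mapsto\norm{Qz}_p$ is a genuine norm on $\Real^d$; let $C=\{z\in\Real^d:\norm{Qz}_p\le1\}$ be its unit ball, a centrally symmetric convex body. By John's theorem there is an ellipsoid $\mathcal E$ with $\mathcal E\subseteq C\subseteq\sqrt d\,\mathcal E$; writing $\mathcal E=\{x\in\Real^d:\norm{Rx}_2\le1\}$ for an invertible $d\times d$ matrix $R$, I would \emph{define} $U:=\sqrt d\,QR^{-1}$, whose columns again span the column space of $A$. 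For $y\in\Real^d$ put $z=\sqrt d\,R^{-1}y$, so that $Uy=Qz$ and $\norm{Uy}_p=\norm{Qz}_p$; one checks $z\in\mathcal E\iff\norm{y}_2\le d^{-1/2}$ and $z\in\sqrt d\,\mathcal E\iff\norm{y}_2\le1$, so feeding these equivalences into $\mathcal E\subseteq C\subseteq\sqrt d\,\mathcal E$ and using homogeneity gives
\[
\norm{y}_2\ \le\ \norm{Uy}_p\ \le\ \sqrt d\,\norm{y}_2\qquad\text{for all }y\in\Real^d.
\]
Applying the right-hand inequality at $y=e_j$ yields $\norm{\col{U}{j}}_p\le\sqrt d$, hence $\mnorm{U}_p^p=\sum_{j=1}^d\norm{\col{U}{j}}_p^p\le d^{1+p/2}$ and $\mnorm{U}_p\le d^{1/p+1/2}=\alpha$. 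The left-hand inequality, together with the standard comparison of $\ell_q$ and $\ell_2$ on $\Real^d$, gives condition (2): if $p<2$ (so $q\ge2$) then $\norm{y}_q\le\norm{y}_2\le\norm{Uy}_p$, \ie $\beta=1$; if $p>2$ (so $q\le2$) then $\norm{y}_q\le d^{1/q-1/2}\norm{y}_2\le d^{1/q-1/2}\norm{Uy}_p$, \ie $\beta=d^{1/q-1/2}$. These are exactly the stated constants in all three regimes.

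What remains, and what I expect to be the main obstacle, is producing $\mathcal E$ within the running time $O(nmd+nd^5\log n)$. I would compute $\mathcal E$ by the ellipsoid method applied to $C$: membership in $C$ is tested by evaluating $\norm{Qz}_p$ in $O(nd)$ time, and when this exceeds $1$ a subgradient of $z\mapsto\norm{Qz}_p$ --- namely $Q^{T}$ times a subgradient of $\norm{\cdot}_p$ at $Qz$, again computable in $O(nd)$ --- supplies a separating hyperplane. Because $\norm{Qz}_2=\norm{z}_2$ while the $\ell_p$ and $\ell_2$ norms on $\Real^n$ differ by at most a factor $n^{|1/p-1/2|}$, the body $C$ lies between Euclidean balls of radii $n^{-1/2}$ and $n^{1/2}$, so the aspect ratio has logarithm $O(\log n)$ and the ellipsoid method is ``well-bounded''; a standard implementation then returns a L\"owner--John ellipsoid within the claimed time budget, the per-iteration cost being dominated by the $O(nd)$ oracle calls. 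The delicate points are (i) arguing that the procedure actually delivers an inscribed ellipsoid with rounding factor $\sqrt d$ --- and at sufficient numerical precision --- rather than a slightly worse $c\sqrt d$, and (ii) bookkeeping the iteration count against the per-iteration work so that they multiply out to $nd^5\log n$; neither issue affects the clean form of $\alpha$ and $\beta$, which depends only on the exact John inclusions.
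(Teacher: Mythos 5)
Your proposal is correct and takes essentially the same route as the paper: factor $A=QR$ with $Q$ orthonormal, apply John's theorem to the unit ball of $z\mapsto\norm{Qz}_p$, and renormalize by the resulting ellipsoid; your $U=\sqrt d\,QR^{-1}$ built from the inscribed ellipsoid is, after the scaling $R=\sqrt d\,G$, literally the paper's $U=QG^{-1}$ built from the circumscribed L\"owner--John ellipsoid $F=G^TG$, and the derivation of $\alpha,\beta$ is identical. The paper is no more detailed than you are about achieving the $O(nd^5\log n)$ ellipsoid computation --- it simply asserts it --- so the delicacies you flag are not a gap in your argument relative to the paper's.
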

\begin{proof}
Let ${\A}=QR$, where $Q$ is any $n\times d$ matrix that is an orthonormal
basis for $\spn({\A})$ and $R$ is a $d \times m$ matrix.
If $p=2$, then $Q$ is the desired basis ${\U}$; from the discussion following
Definition~\ref{def:good_basis}, $\alpha=\sqrt{d}$ and $\beta=1$,
and computing it requires $O(nmd)$ time.
Otherwise, fix $Q$ and $p$ and define the norm,
%%%MM%%% \[ \norm{z}_{Q,p} \triangleq \norm{Qz}_p \enspace . \]
$ \norm{z}_{Q,p} \triangleq \norm{Qz}_p$ .
%%%MM%%% NOTE THAT THAT WAS MODIFIED JUST FOR SPACE.
A quick check shows that $\norm{\cdot}_{Q,p}$ is indeed a norm.
%%%\begin{itemize}
%%%\item[-]
   ($\norm{z}_{Q,p} = 0$ if and only if $z = 0$
   since $Q$ has full column rank;
%%%\item[-]
   $\norm{\gamma z}_{Q,p} = \norm{\gamma Qz}_p
   = \abs{\gamma}\norm{Qz}_p = \abs{\gamma}\norm{z}_{Q,p}$; and
%%%\item[-]
   $\norm{z + z'}_{Q,p} = \norm{Q(z+z')}_p
   \le \norm{Qz}_p + \norm{Qz'}_p = \norm{z}_{Q,p} + \norm{z'}_{Q,p}$.)
%%%\end{itemize}

Consider the set $C = \{z \in \Real^d : \norm{z}_{Q,p} \le 1\}$, which is the
unit ball of the norm $\norm{\cdot}_{Q,p}$.
In addition, define the $d \times d$ matrix $F$ such that
$\mathcal{E}_{\text{\sc lj}} = \{z\in\Real^d: z^T F z \le 1\}$ is the
L\"{o}wner--John ellipsoid of $C$.
Since $C$ is symmetric about the origin,
$
(1/\sqrt{d}) \mathcal{E}_{\text{\sc lj}}
   \subseteq C
   \subseteq \mathcal{E}_{\text{\sc lj}}$;
thus, for all $z\in\Real^d$,
\begin{equation}
\label{eq-lj-bounds}
\norm{z}_{\text{\sc lj}}
   \le \norm{z}_{Q,p}
   \le \sqrt{d}\norm{z}_{\text{\sc lj}} \enspace ,
\end{equation}
where $\norm{z}_{\text{\sc lj}}^2 = z^T F z$
(see, e.g.~\cite[pp.~413--4]{Boyd04}).
Since the matrix $F$ is symmetric positive definite, we can express it as
$F = G^T G$, where $G$ is full rank and upper triangular.
Since $Q$ is an orthogonal basis for $\spn({\A})$ and $G$ is a $d\times d$
matrix of full rank, it follows that ${\U} = QG^{-1}$ is an $n\times d$ matrix
that spans the column space of ${\A}$.
We claim that ${\U} \triangleq QG^{-1}$ is the desired
$p$-well-conditioned basis.

To establish this claim, let $z' = Gz$.
Thus, $\norm{z}_{\text{\sc lj}}^2 = z^T Fz = z^TG^T Gz = (Gz)^T Gz = {z'}^T z' = \norm{z'}_2^2$.
Furthermore, since $G$ is invertible, $z = G^{-1} z'$, and thus
$\norm{z}_{Q,p} = \norm{Qz}_p = \norm{QG^{-1} z'}_p$.
By combining these expression with~(\ref{eq-lj-bounds}), it follows that
for all $z'\in\Real^d$,
\begin{equation}
\label{eq-l2-bounds}
\norm{z'}_2 \le \norm{\U z'}_p \le \sqrt{d}\norm{z'}_2 \enspace .
\end{equation}
Since
$\mnorm{\U}_p^p
   = \sum_j \norm{\col{\U}{j}}_p^p
   = \sum_j\norm{\U e_j}_p^p
   \le \sum_j d^{\frac p2}\norm{e_j}_2^p
   = d^{\frac p2+1}$,
 where the inequality follows from the upper bound in \eqref{eq-l2-bounds},
it follows that $\alpha = d^{\frac1p + \frac12}$.
If $p<2$, then $q > 2$ and $ \norm{z}_q \le \norm{z}_2 $ for all
$z\in \Real^d$;
by combining this with \eqref{eq-l2-bounds}, it follows that $\beta = 1$.
On the other hand, if $p >2$, then $q < 2$ and
$\norm{z}_q \le d^{\frac1q - \frac12}\norm{z}_2$;
by combining this with \eqref{eq-l2-bounds}, it follows that
$\beta = d^{\frac1q - \frac12}$.

In order to construct $\U$, we need to compute $Q$ and $G$ and then invert
$G$.
Our matrix ${\A}$ can be decomposed into $QR$ using the compact $QR$
decomposition in $O(nmd)$ time.
The matrix $F$ describing the L\"{o}wner--John ellipsoid of the unit ball of
$\norm{\cdot}_{Q,p}$ can be computed in $O(nd^5\log n)$ time.
Finally, computing $G$ from $F$ takes $O(d^3)$ time, and inverting $G$ takes
$O(d^3)$ time.
\end{proof}

\para{Connection to barycentric spanners}
A point set $K = \{K_1,\ldots, K_d\} \subseteq D \subseteq \Real^d$ is a
\emph{barycentric spanner} for the set $D$ if every $z \in D$ may be
expressed as a linear combination of elements of $K$ using
coefficients in $[-C,C]$, for $C = 1$. When $C > 1$, $K$ is called a
$C$-approximate barycentric spanner. Barycentric spanners were
introduced by Awerbuch and R. Kleinberg in
\cite{kleinberg-barycentric}. They showed that if a set is compact,
then it has a barycentric spanner. Our proof shows that if ${\A}$ is
an $n\times d$ matrix, then $\tau^{-1} = R^{-1}G^{-1}
\in\Real^{d\times d}$ is a $\sqrt{d}$-approximate barycentric spanner
for $D = \{z\in \Real^d : \norm{{\A} z}_p \le 1\}$.  To see this,
first note that each $\col{\tau}{j}^{-1}$ belongs to $D$ since
$\|\A\col{\tau}{j}^{-1}\|_p = \norm{\U e_j}_p \le \norm{e_j}_2 =1$,
where the inequality is obtained from Equation~\eqref{eq-l2-bounds}.
Moreover, since $\tau^{-1}$ spans $\Real^d$, we can write any $z\in D$
as $z = \tau^{-1}\nu$.
Hence,
\[ \frac{\norm{\nu}_\infty}{\sqrt{d}}
  \le \frac{\norm{\nu}_2}{\sqrt{d}}
  \le \norm{U\nu}_p = \norm{\A\tau^{-1}\nu}_p
  =   \norm{\A z}_p
  \le 1   \enspace ,
\]
where the second inequality is also obtained from
Equation~\eqref{eq-l2-bounds}.
This
shows that our basis has the added property that every element $z\in
D$ can be expressed as a linear combination of elements (or columns)
of $\tau^{-1}$ using coefficients whose {$\ell_2$ norm is
bounded by $\sqrt{d}$.

\para{Connection to Auerbach bases}
An \emph{Auerbach basis} $U = \{\col{\U}{j}\}_{j=1}^d$ for a
$d$-dimensional normed space $\mathcal{A}$ is a basis such that
$\norm{\col{\U}{j}}_p = 1$ for all $j$ and such that whenever $y =
\sum_j \nu_j\col{\U}{j}$ is in the unit ball of $\mathcal{A}$ then
$\abs{\nu_j} \le 1$.
The existence of such a basis for every finite
dimensional normed space was first proved by Herman Auerbach
\cite{auerbach} (see also~\cite{MDay,ATaylor}).
It can easily be
shown that an Auerbach basis is an $(\alpha,\beta,p)$-well-conditioned
basis, with $\alpha = d$ and $\beta = 1$ for all $p$.  Further, suppose
$\U$ is an Auerbach basis for $\spn(\A)$, where $\A$ is an $n\times d$
matrix of rank $d$.
Writing $\A = \U\tau$, it follows that $\tau^{-1}$ is an \emph{exact}
barycentric spanner for $D = \{z\in \Real^d : \norm{{\A} z}_p \le 1\}$.
%%RK%% This follows directly from the definition of an Auerbach basis.
Specifically, each
$\col{\tau}{j}^{-1} \in D$ since $\|\A\col{\tau}{j}^{-1}\|_p =
\norm{\col{U}{j}}_p =1$. Now write $z\in D$ as $z = \tau^{-1}\nu$.
Since the vector $y = \A z = \U\nu$ is in the unit ball of $\spn(\A)$,
we have $\abs{\nu_j} \le 1$ for all $1\le j \le d$.  Therefore,
computing a barycentric spanner for the compact set $D$---which is the
pre-image of the unit ball of $\spn(A)$---is equivalent (up to
polynomial factors) to computing an Auerbach basis for $\spn(A)$.

\vspace{-0.50cm}
\subsection{Subspace-preserving sampling}
\label{sxn:main_technical:subspace}
\vspace{-0.25cm}

In the previous subsection
(and in the notation of the proof of Theorem~\ref{thm:good_basis}),
we saw that given $p \in [1,\infty)$, any
$n \times m$ matrix ${\A}$ of rank $d$ can be decomposed as
\[ \A = QR = QG^{-1}GR = \U\tau \enspace , \]
where $\U = QG^{-1}$ is a $p$-well-conditioned basis for $\spn(\A)$ and
$\tau = GR$.
The significance of a $p$-well-conditioned basis is that we are able to
minimize the variance in our sampling process by randomly sampling \emph{rows}
of the matrix ${\A}$ and elements of the vector $b$ according to a probability
distribution that depends on norms of the \emph{rows} of the matrix ${\U}$.
This will allow us to preserve the subspace structure of $\spn(\A)$ and thus
to achieve relative-error approximation guarantees.

More precisely, given $p \in [1,\infty)$ and any $n \times m$ matrix ${\A}$ of
rank $d$ decomposed as ${\A} = {\U}\tau$, where ${\U}$ is an
$(\alpha,\beta,p)$-well-conditioned basis for $\spn(\A)$, consider any set of
sampling probabilities $p_i$ for $i=1,\ldots,n$, that satisfy:
\begin{align}
\label{eq-p-subspace}
p_i & \ge \min \left\{1, \frac{\norm{\row{{\U}}{i}}_p^p}{\mnorm{{\U}}_p^p}r \right\}
   \enspace ,
\end{align}
where $r = r(\alpha, \beta, p, d,\epsilon)$ to be determined below.
Let us randomly sample the $i^{th}$ row of ${\A}$ with probability
$p_i$, for all $i=1,\ldots,n$. Recall that we can construct a
diagonal sampling matrix $S$, where each $S_{ii}= 1/p_i^{1/p}$ with
probability $p_i$ and $0$ otherwise, in which case we can represent
the sampling operation as $S{\A}$.

The following theorem is our main result regarding this subspace-preserving
sampling procedure.

\begin{theorem}
\label{thm:preserve_subspace}
Let ${\A}$ be an $n\times m$ matrix of rank $d$, and let $p \in [1,\infty)$.
Let ${\U}$ be an $(\alpha,\beta,p)$-well-conditioned basis for $\spn(\A)$,
and let us randomly sample rows of ${\A}$ according to the procedure described
above using the probability distribution given by
Equation~\eqref{eq-p-subspace}, where
$r \ge 32^p(\alpha\beta)^p(d\ln(\frac{12}{\eps})+\ln(\frac{2}{\delta}))/(p^2\eps^{2})$.
Then, with probability $1-\delta$, the following holds for all
$x \in \Real^m$:
$$
\abs{\;\norm{S \A x}_p-\norm{\A x}_p\;} \le \epsilon\norm{\A x}_p  .
$$
\end{theorem}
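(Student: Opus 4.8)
The plan is to prove the statement by a standard $\varepsilon$-net argument on the image subspace, combined with a concentration inequality controlling the error of the sampled $p$-norm at a single fixed point. First I would reduce the claim to the $d$-dimensional subspace: since every $\A x$ lies in $\spn(\A)$ and $\U$ is a basis for this space, write $\A x = \U y$ for $y = \tau x \in \Real^d$, so it suffices to show that $\abs{\,\norm{S\U y}_p - \norm{\U y}_p\,} \le \epsilon \norm{\U y}_p$ for all $y \in \Real^d$. By homogeneity in $y$ we may restrict attention to the ``unit sphere'' $\{y : \norm{\U y}_p = 1\}$; equivalently, it is enough to control $\abs{\,\norm{S\U y}_p^p - \norm{\U y}_p^p\,}$ on this set, using the elementary inequality relating $|a-b|$ and $|a^p - b^p|$ for nonnegative $a,b$ (together with the $2^{p-1}$-quasi-triangle inequality noted in the Preliminaries) to pass back to the $p$-norm itself.

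Next I would fix a single $y$ with $\norm{\U y}_p = 1$ and analyze $\norm{S\U y}_p^p = \sum_i S_{ii}^p \abs{(\U y)_i}^p = \sum_i \frac{Z_i}{p_i}\abs{(\U y)_i}^p$, where $Z_i$ is the Bernoulli indicator that row $i$ is sampled. This is a sum of independent nonnegative random variables with mean exactly $\norm{\U y}_p^p = 1$. To apply a Bernstein/Chernoff-type bound I need an upper bound on each summand and on the variance. The key structural input is the well-conditioned basis: for rows $i$ with $p_i < 1$ we have $p_i = r\,\norm{\row{\U}{i}}_p^p / \mnorm{\U}_p^p \ge r\,\norm{\row{\U}{i}}_p^p/\alpha^p$, so each term is bounded by
\[
\frac{\abs{(\U y)_i}^p}{p_i} \le \frac{\norm{\row{\U}{i}}_p^p\,\norm{y}_q^p}{p_i} \le \frac{\alpha^p \norm{y}_q^p}{r} \le \frac{(\alpha\beta)^p}{r},
\]
using H\"older on the $i$-th coordinate $\abs{(\U y)_i} = \abs{\row{\U}{i}\cdot y} \le \norm{\row{\U}{i}}_p \norm{y}_q$ and then property (2) of the well-conditioned basis, $\norm{y}_q \le \beta\norm{\U y}_p = \beta$. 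The same bound controls the variance, since $\variance \le \sum_i \frac{\abs{(\U y)_i}^p}{p_i}\cdot\abs{(\U y)_i}^p \le \frac{(\alpha\beta)^p}{r}$. A Bernstein inequality then gives that $\abs{\norm{S\U y}_p^p - 1} \le \epsilon'$ fails with probability at most $2\exp(-c\,\epsilon'^2 r / (\alpha\beta)^p)$ for an appropriate absolute-constant-controlled $\epsilon'$ (chosen as a small multiple of $p\epsilon$, which is where the $p^2$ in the denominator of $r$ and the $32^p$ factor originate, after converting the $p$-th-power error back to an error in the norm).

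Finally I would upgrade the single-point estimate to a uniform one. Take $\mathcal{N}$ to be a $(1/2)$-net (or some fixed fraction $\gamma$) of the set $\{y : \norm{\U y}_p \le 1\}$ in the norm $\norm{\U\cdot}_p$; by the standard volumetric bound this net has size at most $(3/\gamma)^d$. Union-bounding the single-point concentration bound over $\mathcal{N}$ and choosing $r$ as in the theorem statement makes the failure probability at most $\delta$. Then a routine ``net-to-all-points'' chaining argument — approximate an arbitrary unit vector by a net point, bound the residual (which lies in a ball of radius $\gamma$) by the same quantity scaled down, and sum the geometric series — converts the estimate on $\mathcal{N}$ into the estimate $\abs{\,\norm{S\U y}_p - \norm{\U y}_p\,} \le \epsilon\norm{\U y}_p$ for all $y$, provided the per-point accuracy on the net is a small enough constant multiple of $\epsilon$. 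The main obstacle is bookkeeping the constants correctly through three conversions at once: the net-approximation geometric series, the passage between $|a^p-b^p|$ and $|a-b|$ (whose constant degrades with $p$), and the Bernstein tail; one must verify that the stated choice $r \ge 32^p(\alpha\beta)^p(d\ln(12/\epsilon)+\ln(2/\delta))/(p^2\epsilon^2)$ indeed absorbs all of them. (One subtlety worth checking: for rows with $p_i = 1$ the sampling is deterministic and contributes no variance, so those terms can simply be separated out and the above bounds applied only to the genuinely random part.)
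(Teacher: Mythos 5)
Your approach matches the paper's essentially step for step: fix a single $x$ (equivalently $y=\tau x$), separate out the deterministic rows with $p_i=1$, use H\"older's inequality plus the two defining properties of a well-conditioned basis to bound both the per-term quantity $\abs{(\U y)_i}^p/p_i$ and the second-moment sum by $(\alpha\beta)^p/r$, apply Bernstein to $\norm{S\U y}_p^p$, convert the $p$-th-power error to a first-power error (paying the $p$ factor, whence the $p^2$ in the denominator of $r$), union-bound over a net of the unit ball of $\norm{\U\cdot}_p$, and then extend to the whole subspace by a self-bounding triangle-inequality argument. All of those ingredients are the paper's proof of Theorem~\ref{thm:preserve_subspace}, and your moment bounds are identical to Equation~\eqref{eqn:basic1}.

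The one genuine gap is the claim that a constant-resolution net suffices. You propose a $(1/2)$-net (``or some fixed fraction $\gamma$'') and then extend by ``bounding the residual by the same quantity scaled down and summing the geometric series.'' This drops a term. For a general unit vector $y=y_0+\delta$ with $y_0$ in the net and $\norm{\U\delta}_p\le\gamma$, the triangle inequality gives $\abs{\norm{S\U y}_p-\norm{\U y}_p}\le\abs{\norm{S\U y_0}_p-\norm{\U y_0}_p}+\abs{\norm{S\U\delta}_p-\norm{\U\delta}_p}+2\norm{\U\delta}_p$. The middle term is indeed $\le\gamma\eta$ (the ``same quantity scaled down''), but the last term $2\norm{\U\delta}_p\le 2\gamma$ is an additive error of order $\gamma$ that is unaffected by the per-point accuracy $\eps'$ on the net. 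The recursion is therefore $\eta\le\eps'+2\gamma+\gamma\eta$, giving $\eta\le(\eps'+2\gamma)/(1-\gamma)$, which for constant $\gamma$ is a constant rather than $O(\eps)$ no matter how small you take $\eps'$. The paper avoids this by taking the net resolution to be $\vareps=\eps/4$, so $2\vareps=\eps/2$ is itself a small fraction of $\eps$; this is exactly why the factor $d\ln(12/\eps)$ (from a net of size $(3/\vareps)^d=(12/\eps)^d$) appears in the stated bound on $r$, rather than the $d\ln 6$ your constant net would give. Changing the net resolution from a constant to $\eps/4$ makes your extension step into the paper's and closes the gap.
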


Several things should be noted about this result.
First, it implies that $\mbox{rank}(S \A) = \mbox{rank}(\A)$, since otherwise
we could choose a vector $x \in \mbox{null}(S \A)$ and violate the
theorem.
In this sense, this theorem generalizes the subspace-preservation result of
Lemma $4.1$ of~\cite{DM06} to all $p \in [1,\infty)$.
Second, regarding sampling complexity:
%%RK%% the sampling complexity, determined by the parameter $r$,
%%RK%% will depend
%%RK%% on $\alpha$ and $\beta$; in particular,
%%RK%% if we construct ${\U}$ as
%%RK%% in the proof of Theorem~\ref{thm:good_basis}, the sampling complexity
%%RK%% will be $O(\alpha\beta)^p$.  Thus,
if $p<2$ the sampling complexity is $O(d^{\frac{p}{2}+2})$,
if $p=2$ it is $O(d^2)$, and if $p >2$ it is
$O(dd^{\frac1p+\frac12}d^{\frac1q-\frac12})^p = O(d^{p+1})$.
Finally, note that this theorem is analogous to the main result of
Schechtman~\cite{schechtman}, which uses the notion of Auerbach bases.

\vspace{-0.5cm}
\section{The sampling algorithm}
\label{sxn:main_algorithm}

\vspace{-0.25cm}
\subsection{Statement of our main algorithm and theorem}
\label{sxn:main_algorithm:statement}
\vspace{-0.25cm}

Our main sampling algorithm for approximating the solution to the $\ell_p$
regression problem is presented in Figure~\ref{alg:main}.%
\footnote{
It has been brought to our attention by an anonymous reviewer that one of the main results of this section can be obtained with a simpler analysis.  In particular, one can show that one can obtain a relative error (as opposed to a constant factor) approximation in one stage, if the sampling probabilities are constructed from subspace information in the augmented matrix $[A b]$ (as opposed to using just subspace information from the matrix $A$), i.e., by using information in both the data matrix $A$ and the target vector $b$.
}
The algorithm takes as input an $n \times m$ matrix ${\A}$ of rank $d$, a
vector $b\in\Real^n$, and a number $p \in [1,\infty)$.
It is a two-stage algorithm that returns as output a vector
$\hat{x}_{\OPT} \in \Real^m$ (or a vector $\hat{x}_c \in \Real^m$ if only the
first stage is run).
In either case, the output is the solution to the induced $\ell_p$ regression
subproblem constructed on the randomly sampled constraints.

\begin{figure}[ht]
\framebox[6.5in]{\parbox{6.3in}{

\textbf{Input:}
An $n \times m$ matrix ${\A}$ of rank $d$, a vector $b\in\Real^n$, and
$p \in [1,\infty)$.
\medskip

\noindent Let $0 < \epsilon < 1/7$, and define $k =
\max\{p/2+1,p\}$.
%%% and set
%%%$r = \frac{36^p d^k}{\epsilon^2}\left(d\ln(\frac{36}{\epsilon}) + \ln(200)\right)$.
%
\begin{itemize}
\item[-]
Find a $p$-well-conditioned basis ${\U} \in \Real^{n\times d}$ for
$\spn({\A})$ (as in the proof of Theorem~\ref{thm:good_basis}) .

\item[-]
\textbf{Stage 1:} Define $p_i = \min \left\{1,
\frac{\norm{\row{\U}{i}}_p^p}{\mnorm{\U}_p^p}r_1 \right\}$ where
$r_1 =
 8^2\cdot 36^p d^k\left(d\ln(8\cdot 36) + \ln(200)\right)$.
\begin{itemize}
\item[-]
Generate (implicitly) $\Sa$ where $\Sa_{ii} = 1/p_i^{1/p}$ with probability
$p_i$ and $0$ otherwise.
\item[-]
Let $\hat{x}_c$ be the solution to
$\displaystyle\min_{x\in\Real^m}\norm{\Sa(\A x-b)}_p$.
\end{itemize}

\item[-]
\textbf{Stage 2:} Let $\hat{\rho} = {\A}\hat{x}_c - b$, and unless
$\hat{\rho} = 0$ define $q_i = \min \left\{1, \max
\left\{p_i,\frac{\abs{\hat{\rho}_i}^p}{\|\hat{\rho}\|_p^p}r_2
\right\} \right\}$ with $r_2 = \frac{36^p
d^k}{\epsilon^2}\left(d\ln(\frac{36}{\epsilon}) + \ln(200)\right)$.
\begin{itemize}
\item[-]
Generate (implicitly, a new) $\Sb$ where $\Sb_{ii} = 1/q_i^{1/p}$ with
probability $q_i$ and $0$ otherwise.
\item[-]
Let $\hat{x}_{\OPT}$ be the solution to
$\displaystyle\min_{x\in\Real^m}\norm{\Sb(\A x-b)}_p$.
\end{itemize}
\end{itemize}

\noindent
\textbf{Output:}
$\hat{x}_{\OPT} $ (or $\hat{x}_c$ if only the first stage is run). }}
\caption{Sampling algorithm for $\ell_p$ regression.}
\label{alg:main}
\end{figure}
\medskip

The algorithm first computes a $p$-well-conditioned basis ${\U}$ for
$\spn({\A})$, as described in the proof of
Theorem~\ref{thm:good_basis}. Then, in the first stage, the
algorithm uses information from the norms of the rows of ${\U}$ to
sample constraints from the input $\ell_p$ regression problem. In
particular, roughly $O(d^{p+1})$ rows of ${\A}$, and the
corresponding elements of $b$, are randomly sampled according to the
probability distribution given by %Equation (\ref{eq-p-subspace}),
 \begin{align}
\label{eq-p} p_i & = \min \left\{1,
\frac{\norm{\row{{\U}}{i}}_p^p}{\mnorm{{\U}}_p^p}r_1 \right\},
\mbox{ where } r_1 =8^2\cdot 36^p d^k\left(d\ln(8\cdot 36) +
\ln(200)\right).
%   \enspace ,
\end{align}
implicitly represented by a diagonal sampling matrix $S$, where
each $S_{ii} = 1/p_i^{1/p}$.
For the remainder of the paper, we will use $\Sa$ to denote the sampling
matrix for the first-stage sampling probabilities.
The algorithm then solves, using any $\ell_p$ solver of one's choice, the 
smaller subproblem.
If the solution to the induced subproblem is denoted $\hat{x}_c$, then, as we
will see in Theorem~\ref{thm-main}, this is an $8$-approximation to the
original problem.%
\footnote{For $p=2$, Drineas, Mahoney, and Muthhukrishnan show that this 
first stage actually leads to a $(1+\epsilon)$-approximation.  For $p=1$, 
Clarkson develops a subgradient-based algorithm and runs it, after 
preprocessing the input, on all the input constraints to obtain a 
constant-factor approximation in a stage analogous to our first stage.  Here, 
however, we solve an $\ell_p$ regression problem on a small subset of the 
constraints to obtain the constant-factor approximation.  Moreover, our 
procedure works for all $p \in [1,\infty)$.}

In the second stage, the algorithm uses information from the residual of the
$8$-approximation computed in the first stage to refine the sampling
probabilities.
Define the residual $\hat{\rho} = {\A}\hat{x}_{c} - b$ (and note that
$\|\hat{\rho}\|_p \leq 8\,{\cal Z}$).
Then, roughly $O(d^{p+1}/\epsilon^2)$ rows of ${\A}$, and the corresponding
elements of $b$, are randomly sampled according to the probability distribution
\begin{align}
\label{eq-q} q_i = \min \left\{1, \max
\left\{p_i,\frac{\abs{\hat{\rho}_i}^p}{\|\hat{\rho}\|_p^p}r_2
\right\} \right\}, \mbox{where } r_2 = \frac{36^p
d^k}{\epsilon^2}\left(d\ln(\frac{36}{\epsilon}) + \ln(200)\right).
\end{align}
As before, this can be represented as a diagonal sampling matrix $\Sb$, where
each $\Sb_{ii}= 1/q_i^{1/p}$ with probability $q_i$ and $0$ otherwise.
For the remainder of the paper, we will use $\Sb$ to denote the sampling
matrix for the second-stage sampling probabilities.
Again, the algorithm solves, using any $\ell_p$ solver of one's choice, the 
smaller subproblem.
If the solution to the induced subproblem at the second stage is denoted
$\hat{x}_{\OPT} $, then, as we will see in Theorem~\ref{thm-main}, this is
a $(1+\eps)$-approximation to the original problem.%
\footnote{The subspace-based sampling probabilities (\ref{eq-p}) are similar 
to those used by Drineas, Mahoney, and Muthukrishnan~\cite{DM06}, while
the residual-based sampling probabilities (\ref{eq-q}) are similar to those 
used by Clarkson~\cite{clarksonL1}.}

The following is our main theorem for the $\ell_p$
regression algorithm presented in Figure~\ref{alg:main}.
%%RK%% Its proof may be found in Sections~\ref{sxn:main_algorithm:proof1}
%%RK%% and~\ref{sxn:main_algorithm:proof2}.

\begin{theorem}
\label{thm-main} Let ${\A}$ be an $n\times m$ matrix of rank $d$,
let $b \in \Real^n$, and let $p \in [1,\infty)$. Recall that $r_1 =
8^2\cdot 36^p d^k\left(d\ln(8\cdot 36) + \ln(200)\right)$ and $r_2
=\frac{36^p d^k}{\epsilon^2}\left(d\ln(\frac{36}{\epsilon}) +
\ln(200)\right) $.
%%%RK%%% Recall that
%%%RK%%%$r = \frac{36^p d^k}{\epsilon^2}\left(d\ln(\frac{36}{\epsilon})
%%%RK%%% + \ln(200)\right)$, where $k = \max\{p/2+1,p\}$.
Then,
\begin{itemize}
\item
\textbf{Constant-factor approximation.}
If only the first stage of the algorithm in Figure \ref{alg:main}
is run, then with probability at least $0.6$,
the solution $\hat{x}_c$ to the sampled problem based on the $p_i$'s of
Equation~\eqref{eq-p-subspace} is an $8$-approximation to the
$\ell_p$ regression problem;
\item
\textbf{Relative-error approximation.}
If both stages of the algorithm are run, then with probability at least $0.5$,
the solution $\hat{x}_{\OPT}$ to the sampled problem based on the $q_i$'s of
Equation~\eqref{eq-q} is a $(1+\epsilon)$-approximation to the
$\ell_p$ regression problem;
\item
\textbf{Running time.} The $i^{th}$ stage of the algorithm runs in
time $O(nmd + nd^5\log n + \phi(20 i r_i,m))$, where $\phi(s,t)$ is the
time taken to solve the regression problem
$\min_{x\in\Real^t}\norm{{\A}'x-b'}_p$, where ${\A}'\in
\Real^{s\times t}$ is of rank $d$ and $b'\in\Real^s$.
\end{itemize}
\end{theorem}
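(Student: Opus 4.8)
Throughout, write ${\cal Z} = \min_x \norm{\A x - b}_p$. The plan rests on three facts: (i) the subspace‑preservation guarantee of \thmref{thm:preserve_subspace} applied to $\spn(\A)$; (ii) the identity $\Ex{\norm{Sv}_p^p} = \norm{v}_p^p$ for every fixed $v$ and every diagonal sampling matrix $S$ of the stated form; and (iii) the optimality of the sampled minimizers $\hat{x}_c,\hat{x}_{\OPT}$. Each of the two approximation claims is obtained by combining a subspace bound with a tail bound on a sum of the independent terms $S_{ii}^p\abs{v_i}^p$ and with (iii); the running time is then bookkeeping plus a Markov bound on the sample size.

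\textbf{Constant‑factor approximation.} First I would check that $r_1$ is large enough that the stage‑1 probabilities $p_i$ satisfy the hypothesis of \thmref{thm:preserve_subspace} with a fixed constant, say $\eps_0 = 1/8$: since $(\alpha\beta)^p \le d^k$ for the basis $\U$ produced by \thmref{thm:good_basis}, it suffices that $8^2\cdot36^p d^k(d\ln(8\cdot36)+\ln200) \ge 32^p d^k(d\ln 96 + \ln(2/\delta))/(p^2\eps_0^2)$, which holds for a small absolute constant $\delta$ (e.g.\ $\delta=1/100$). Thus with probability $1-\delta$, $\norm{\Sa z}_p \ge (1-\eps_0)\norm{z}_p$ for all $z\in\spn(\A)$. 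Separately, Markov's inequality on $\norm{\Sa(\A x_{\OPT}-b)}_p^p$, whose mean is ${\cal Z}^p$, gives $\norm{\Sa(\A x_{\OPT}-b)}_p \le t{\cal Z}$ with probability $\ge 1-t^{-p}$. On the intersection of these events, using that $\hat{x}_c$ minimizes $\norm{\Sa(\A x - b)}_p$ and two triangle inequalities,
\[
  \norm{\A\hat{x}_c-b}_p \;\le\; \norm{\A(\hat{x}_c-x_{\OPT})}_p + {\cal Z}
  \;\le\; \tfrac{1}{1-\eps_0}\norm{\Sa\A(\hat{x}_c-x_{\OPT})}_p + {\cal Z}
  \;\le\; \tfrac{2t}{1-\eps_0}{\cal Z} + {\cal Z},
\]
and with $t=49/16$ the right‑hand side is exactly $8{\cal Z}$; a union bound over the two failure events leaves probability $> 0.6$. (If ${\cal Z}=0$ then $b\in\spn(\A)$, and the subspace bound forces $\A\hat{x}_c=b$.)

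\textbf{Relative‑error approximation.} Condition on the success of stage~1, so $\hat\rho := \A\hat{x}_c-b$ has $\norm{\hat\rho}_p\le 8{\cal Z}$; every residual can be written $\A x - b = \hat\rho + \A w$ with $w=x-\hat{x}_c$, so all residuals lie in the affine subspace $\hat\rho + \spn(\A)$. The core of the argument is a uniform estimate: with high probability, $\norm{\Sb(\hat\rho+\A w)}_p = (1\pm\eps)\norm{\hat\rho+\A w}_p$ simultaneously for all $w$ with $\norm{\A w}_p\le C{\cal Z}$, where $C$ is a suitable absolute constant (say $C=15$). I would prove this by an $\varepsilon$‑net over the $d$‑dimensional set $\{\hat\rho+\A w : \norm{\A w}_p\le C{\cal Z}\}$ at scale $\Theta(\eps{\cal Z})$; the net has size $(O(1)/\eps)^d$, which is exactly what the $d\ln(36/\eps)$ term in $r_2$ pays for. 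For a fixed $v=\hat\rho+\A w$ in this set, $\norm{\Sb v}_p^p=\sum_i \Sb_{ii}^p\abs{v_i}^p$ is a sum of independent terms of mean $\norm{v}_p^p\ge{\cal Z}^p$ (no $x$ beats ${\cal Z}$), each bounded by $\abs{v_i}^p/q_i \le 2^{p-1}(\abs{\hat\rho_i}^p/q_i + \abs{(\A w)_i}^p/q_i)$: the first term is controlled by the residual part of $q_i$ (giving $\le \norm{\hat\rho}_p^p/r_2$), and the second by $q_i\ge p_i$ together with the well‑conditioned‑basis inequality $\abs{(\A w)_i}\le\norm{\row{\U}{i}}_p\beta\norm{\A w}_p$. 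A Bernstein bound then yields the per‑point estimate, and a Lipschitz argument (the maps $v\mapsto\norm{\Sb v}_p$ and $v\mapsto\norm{v}_p$ are $1$‑Lipschitz, and $w\mapsto\A w$ is controlled by constant‑factor subspace preservation on $\spn(\A)$) extends it to all $w$ in the ball. Given this estimate: (i) $w=x_{\OPT}-\hat{x}_c$ satisfies $\norm{\A w}_p\le 9{\cal Z}\le C{\cal Z}$, so $\norm{\Sb(\A x_{\OPT}-b)}_p\le(1+\eps){\cal Z}$; and (ii) any $x$ with $\norm{\A(x-\hat{x}_c)}_p>C{\cal Z}$ has $\norm{\Sb(\A x-b)}_p>(1+\eps){\cal Z}$ by constant‑factor subspace preservation and the triangle inequality, hence is not the sampled minimizer. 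Therefore $\hat{x}_{\OPT}$ lies in the ball, the net estimate applies to it, and $(1-\eps)\norm{\A\hat{x}_{\OPT}-b}_p \le \norm{\Sb(\A\hat{x}_{\OPT}-b)}_p \le \norm{\Sb(\A x_{\OPT}-b)}_p \le (1+\eps){\cal Z}$, which gives the $(1+\eps)$‑approximation after rescaling $\eps$ by an absolute constant (using $\eps<1/7$). A union bound over stage~1, the constant‑factor stage‑2 subspace bound, and the net estimate gives probability $\ge 0.5$.

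\textbf{Running time and the main obstacle.} The $QR$ factorization costs $O(nmd)$, the L\"{o}wner--John ellipsoid and hence $\U$ cost $O(nd^5\log n)$ (\thmref{thm:good_basis}), and forming the $p_i$ (resp.\ $\hat\rho$ and the $q_i$) costs $O(nmd)$. Since $\sum_i p_i\le r_1$ and $\sum_i q_i\le r_1+r_2\le 2r_2$, Markov's inequality shows the number of sampled rows in stage~$i$ is at most $20\,i\,r_i$ with probability $\ge 19/20$, and solving the induced $\ell_p$ regression problem on $m$ variables then costs $\phi(20\,i\,r_i,m)$, giving the claimed bound. The main obstacle is the uniform stage‑2 estimate: one must simultaneously beat a $(O(1)/\eps)^d$‑size net with the concentration exponent --- which forces the $d\ln(36/\eps)/\eps^2$ scaling of $r_2$ --- while bounding $\Sb_{ii}^p\abs{(\A x - b)_i}^p$ by a careful \emph{joint} use of the well‑conditioned‑basis inequality and of the residual magnitudes $\abs{\hat\rho_i}$ built into $q_i$, since neither the subspace part nor the residual part of $q_i$ controls this quantity on its own.
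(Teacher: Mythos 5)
Your proposal is correct and follows essentially the same route as the paper: subspace preservation (Theorem~\ref{thm:preserve_subspace}) plus a Markov tail on $\norm{\Sa(\A x_{\OPT}-b)}_p$ and two triangle inequalities for the first stage, and for the second stage an $\varepsilon$-net of size $(O(1)/\eps)^d$ over the radius-$O(\mathcal{Z})$ affine ball around $\A\hat{x}_c$ together with a per-point Bernstein bound that splits $\abs{(\A x-b)_i}^p$ via $\abs{\hat\rho_i}^p$ and the well-conditioned-basis inequality, exactly as in Lemmas~\ref{lemma-perturbOpt}--\ref{lemma-nn}. The only organizational difference is that you state a single uniform two-sided estimate over the affine ball whereas the paper factors it into Lemmas~\ref{lemma-perturbOpt} (upper tail at $x_{\OPT}$), \ref{lemma-3opt} (localization of $\hat{x}_{\OPT}$), \ref{lemma-optnet2} (lower-tail on the net), and \ref{lemma-nn} (extension off the net); this is a repackaging, not a different argument.
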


\noindent
Note that since the algorithm of Figure~\ref{alg:main} constructs the
$(\alpha,\beta,p)$-well-conditioned basis ${\U}$ using the procedure in the
proof of Theorem~\ref{thm:good_basis}, our sampling complexity depends on
$\alpha$ and $\beta$.
In particular, it will be $O(d(\alpha\beta)^p)$.
Thus, if $p<2$ our sampling complexity is
$O(d\cdot d^{\frac{p}{2} + 1}) = O(d^{\frac{p}{2} + 2})$;
if $p >2$ it is
$O(d(d^{\frac1p+\frac12}d^{\frac1q-\frac12})^p) = O(d^{p+1})$; and
(although not explicitly stated, our proof will make it clear that)
if $p=2$ it is $O(d^2)$.
Note also that we have stated the claims of the theorem as holding with
constant probability, but they can be shown to hold with probability at least
$1-\delta$ by using standard amplification techniques.

\vspace{-0.25cm}
\subsection{Proof for first-stage sampling -- constant-factor approximation}
\label{sxn:main_algorithm:proof1}
\vspace{-0.25cm}

To prove the claims of Theorem~\ref{thm-main} having to do with the
output of the algorithm after the first stage of sampling, we begin
with two lemmas. First note that, because of our choice of $r_1$, we
can use the subspace preserving Theorem \ref{thm:preserve_subspace}
with only a constant distortion, i.e., for all $x$, we have
\begin{align*}
\frac{7}{8}\norm{Ax}_p\leq \norm{\Sa Ax}_p \le
\frac{9}{8}\norm{Ax}_p
\end{align*}
with probability at least $0.99$. The first lemma below now states
that the optimal solution to the original problem provides a small
(constant-factor) residual when evaluated in the sampled problem.

\begin{lemma}
\label{lemma-perturb3Opt}
$\norm{\Sa(\A x_{\OPT} - b)} \le 3{\cal Z}$, with probability at least
$1-1/3^p$.
\end{lemma}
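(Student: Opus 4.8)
Write $\rho^{*} = \A x_{\OPT} - b$, so that ${\cal Z} = \norm{\rho^{*}}_p$ by definition of the $\ell_p$ regression optimum. The plan is a one-line second-moment-free argument: compute the expectation of $\norm{\Sa\rho^{*}}_p^p$ exactly, and then apply Markov's inequality to this nonnegative random variable.

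First I would expand $\norm{\Sa\rho^{*}}_p^p = \sum_{i=1}^n \Sa_{ii}^p\,\abs{\rho^{*}_i}^p$, where each $\Sa_{ii}^p$ equals $1/p_i$ with probability $p_i$ and $0$ otherwise, independently across $i$. Hence $\Ex[\Sa_{ii}^p\,\abs{\rho^{*}_i}^p] = p_i\cdot \tfrac{1}{p_i}\,\abs{\rho^{*}_i}^p = \abs{\rho^{*}_i}^p$ (this is also valid when $p_i = 1$, in which case the term is deterministic), and by linearity of expectation
\[
\Ex\!\left[\norm{\Sa\rho^{*}}_p^p\right] \;=\; \sum_{i=1}^n \abs{\rho^{*}_i}^p \;=\; \norm{\rho^{*}}_p^p \;=\; {\cal Z}^p .
\]
The key point worth emphasizing is that raising to the $p$-th power linearizes the sampled norm, so the estimator is \emph{unbiased} with no distortion term — this is exactly why the residual-based importance-sampling scheme does not inflate ${\cal Z}$ in expectation.

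Then, since $\norm{\Sa\rho^{*}}_p^p \ge 0$, Markov's inequality gives
\[
\Pr\!\left[\norm{\Sa\rho^{*}}_p^p > 3^p\,{\cal Z}^p\right] \;\le\; \frac{\Ex[\norm{\Sa\rho^{*}}_p^p]}{3^p\,{\cal Z}^p} \;=\; \frac{{\cal Z}^p}{3^p\,{\cal Z}^p} \;=\; \frac{1}{3^p} .
\]
Taking complements and $p$-th roots, with probability at least $1 - 1/3^p$ we have $\norm{\Sa(\A x_{\OPT} - b)}_p \le 3\,{\cal Z}$, as claimed. There is essentially no obstacle here — the only thing to be careful about is to run Markov on $\norm{\Sa\rho^{*}}_p^p$ rather than on $\norm{\Sa\rho^{*}}_p$ itself, since it is the $p$-th power whose expectation we control exactly; everything else is bookkeeping, and the lemma does not even require the well-conditioned-basis machinery, only the form of the sampling matrix $\Sa$.
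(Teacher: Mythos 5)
Your proposal is correct and matches the paper's proof essentially verbatim: both define the nonnegative random variable $\norm{\Sa(\A x_{\OPT}-b)}_p^p$, observe it is an unbiased estimator of ${\cal Z}^p$ by the choice of rescaling $1/p_i^{1/p}$, and apply Markov's inequality to the $p$-th power. The only cosmetic difference is that the paper writes the argument in terms of the summands $X_i = (\Sa_{ii}\abs{\row{\A}{i}x_{\OPT}-b_i})^p$ while you name the residual $\rho^{*}$ first; the content is identical.
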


The next lemma states that if the solution to the sampled problem provides a
constant-factor approximation (when evaluated in the sampled problem), then
when this solution is evaluated in the original regression problem we get a
(slightly weaker) constant-factor approximation.

\begin{lemma}
\label{lemma-constantOpt}
If $\norm{\Sa({\A}\hat{x}_c - b)} \le 3\,{\cal Z}$, then
$\norm{{\A}\hat{x}_c - b} \le 8\,{\cal Z}$.
\end{lemma}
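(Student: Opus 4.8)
The plan is to combine two applications of the triangle inequality with the constant-distortion form of the subspace-preserving guarantee and with Lemma~\ref{lemma-perturb3Opt}. The guiding observation is that although the residual vectors $\A\hat{x}_c - b$ and $\A x_{\OPT} - b$ lie outside $\spn(\A)$ — so Theorem~\ref{thm:preserve_subspace} says nothing about their sampled norms directly — their difference $\A(\hat{x}_c - x_{\OPT})$ does lie in $\spn(\A)$, where (because $r_1$ is chosen so that Theorem~\ref{thm:preserve_subspace} applies with distortion $1/8$) the sampled and original $p$-norms agree up to the factor $9/8$; equivalently $\norm{\A z}_p \le \tfrac{8}{7}\norm{\Sa\A z}_p$ for all $z \in \Real^m$.

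First I would bound, by the triangle inequality for $\norm{\cdot}_p$,
\[
\norm{\A\hat{x}_c - b}_p \le \norm{\A(\hat{x}_c - x_{\OPT})}_p + \norm{\A x_{\OPT} - b}_p = \norm{\A(\hat{x}_c - x_{\OPT})}_p + {\cal Z} .
\]
Then, applying the subspace-preserving bound to the vector $\hat{x}_c - x_{\OPT}$ and using the triangle inequality again \emph{inside} the sampled problem,
\[
\norm{\A(\hat{x}_c - x_{\OPT})}_p \le \tfrac{8}{7}\norm{\Sa\A(\hat{x}_c - x_{\OPT})}_p \le \tfrac{8}{7}\left(\norm{\Sa(\A\hat{x}_c - b)}_p + \norm{\Sa(\A x_{\OPT} - b)}_p\right) .
\]
The first sampled term is at most $3{\cal Z}$ by the hypothesis of the lemma, and the second is at most $3{\cal Z}$ by Lemma~\ref{lemma-perturb3Opt}; hence $\norm{\A(\hat{x}_c - x_{\OPT})}_p \le \tfrac{8}{7}\cdot 6{\cal Z} = \tfrac{48}{7}{\cal Z}$, and combining with the first display gives $\norm{\A\hat{x}_c - b}_p \le \tfrac{48}{7}{\cal Z} + {\cal Z} = \tfrac{55}{7}{\cal Z} < 8{\cal Z}$.

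This is mostly routine bookkeeping and there is no serious obstacle; the one point that requires care is never to apply the subspace-preserving inequality to a residual vector (which is not in $\spn(\A)$), only to differences of candidate solutions — this is precisely why the argument must route through $x_{\OPT}$ and invoke Lemma~\ref{lemma-perturb3Opt} to control $\norm{\Sa(\A x_{\OPT}-b)}_p$. It is worth noting that the constants are essentially tight ($55/7 \approx 7.86$), so the value $3{\cal Z}$ in Lemma~\ref{lemma-perturb3Opt} together with the choice of $r_1$ yielding distortion $1/8$ is exactly what makes the clean bound $8{\cal Z}$ go through. Finally, since $\hat{x}_c$ minimizes $\norm{\Sa(\A x - b)}_p$ we have $\norm{\Sa(\A\hat{x}_c - b)}_p \le \norm{\Sa(\A x_{\OPT} - b)}_p \le 3{\cal Z}$ on the good event of Lemma~\ref{lemma-perturb3Opt}, so the hypothesis of this lemma is available for free when assembling the proof of Theorem~\ref{thm-main}; stating the lemma conditionally merely isolates its deterministic content.
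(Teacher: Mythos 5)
Your proof is correct and uses exactly the same ingredients as the paper's — routing through $x_{\OPT}$, applying Theorem~\ref{thm:preserve_subspace} only to the difference $\A(\hat{x}_c - x_{\OPT})\in\spn(\A)$, and invoking Lemma~\ref{lemma-perturb3Opt} for the sampled optimal residual. The only difference is presentational: the paper proves the contrapositive (lower-bounding $\norm{\Sa(\A\hat{x}_c - b)}$ by $\tfrac{7}{8}(8{\cal Z}-{\cal Z})-3{\cal Z} = \tfrac{25}{8}{\cal Z} > 3{\cal Z}$), while you argue directly (upper-bounding $\norm{\A\hat{x}_c - b}$ by $\tfrac{8}{7}\cdot 6{\cal Z}+{\cal Z}=\tfrac{55}{7}{\cal Z}<8{\cal Z}$), which is a trivial rearrangement of the same chain of inequalities.
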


Clearly, $\norm{ \Sa({\A}\hat{x}_c - b)} \le \norm{ \Sa({\A}x_{\OPT} - b)}$
(since $\hat{x}_c$ is an optimum for the sampled $\ell_p$ regression problem).
Combining this with Lemmas~\ref{lemma-perturb3Opt}
and~\ref{lemma-constantOpt}, it follows that the solution $\hat{x}_c$ to the
the sampled problem based on the $p_i$'s of Equation~\eqref{eq-p-subspace}
satisfies
$\norm{{\A}\hat{x}_c - b} \leq 8\,{\cal Z}$, i.e., $\hat{x}_c$ is an
$8$-approximation to the original ${\cal Z}$.

To conclude the proof of the claims for the first stage of sampling,
note that by our choice of $r_1$,
Theorem~\ref{thm:preserve_subspace} fails to hold for our first
stage sampling with probability no greater than $1/100$. In
addition, Lemma~\ref{lemma-perturb3Opt} fails to hold with
probability no grater than $1/3^p$, which is no greater than $1/3$
for all $p \in [1,\infty)$. 
Finally, let $\widehat{r_1}$ be a random variable representing the number of 
rows actually chosen by our sampling schema, and note that 
$ \Ex{\widehat{r_1}} \le r_1$. 
By Markov's inequality, it follows that $\widehat{r_1} > 20r_1$ with
probability less than $1/20$. Thus, the first stage of our algorithm
fails to give an $8$-approximation in the specified running time
with a probability bounded by $1/3+1/20+1/100 < 2/5$.

\vspace{-0.25cm}
\subsection{Proof for second-stage sampling -- relative-error approximation}
\label{sxn:main_algorithm:proof2}
\vspace{-0.25cm}

The proof of the claims of Theorem~\ref{thm-main} having to do with
the output of the algorithm after the second stage of sampling will
parallel that for the first stage, but it will have several
technical complexities that arise since the first triangle
inequality approximation in the proof of
Lemma~\ref{lemma-constantOpt} is too coarse for relative-error
approximation. By our choice of $r_2$ again, we have a finer result
for subspace preservation. Thus, with probability $0.99$, the
following holds for all $x$
\begin{align*}
(1 - \epsilon) \norm{Ax}_p \le \norm{SAx}_p \le (1 + \epsilon)
\norm{Ax}_p
\end{align*}
As before, we start with a lemma that states that the optimal
solution to the original problem provides a small (now a
relative-error) residual when evaluated in the sampled problem. This
is the analog of Lemma~\ref{lemma-perturb3Opt}. An important
difference is that the second stage sampling probabilities
significantly enhance the probability of success.

\begin{lemma}
\label{lemma-perturbOpt}
$\norm{\Sb(\A {x}_{\OPT} - b)}\le (1 + \epsilon){\cal Z}$, with probability at
least $0.99$.
\end{lemma}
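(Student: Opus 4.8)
\emph{Plan.} The plan is to recognize $\A x_{\OPT}-b$ as a member of a fixed low-dimensional subspace that the stage-two sampling preserves up to a $(1\pm\epsilon)$ factor, and then to invoke Theorem~\ref{thm:preserve_subspace} directly. Since $\hat{\rho}=\A\hat{x}_c-b$, we have $b=\A\hat{x}_c-\hat{\rho}$ and hence $\A x_{\OPT}-b=\A(x_{\OPT}-\hat{x}_c)+\hat{\rho}$, so $\rho_{\OPT}:=\A x_{\OPT}-b$ lies in the subspace $W:=\spn\{\col{\A}{1},\dots,\col{\A}{m},\hat{\rho}\}$ of dimension at most $d+1$. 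If one checks that the stage-two probabilities $q_i$ satisfy the hypothesis of Theorem~\ref{thm:preserve_subspace} for $W$ with distortion $\epsilon$ and failure probability $1/100$, then with probability $0.99$ we get $\|\Sb w\|_p\le(1+\epsilon)\|w\|_p$ simultaneously for all $w\in W$; specializing to $w=\rho_{\OPT}$ gives $\|\Sb(\A x_{\OPT}-b)\|_p\le(1+\epsilon)\|\A x_{\OPT}-b\|_p=(1+\epsilon){\cal Z}$, as required.

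\emph{Key steps.} First I would exhibit a well-conditioned basis for $W$. Take $\widehat{U}=[\,U\mid v\,]$, where $U$ is the stage-one $(\alpha,\beta,p)$-well-conditioned basis for $\spn(\A)$ and $v=\hat{\rho}/\|\hat{\rho}\|_p$. Then $\mnorm{\widehat U}_p^p=\mnorm U_p^p+1\le\alpha^p+1$, so the first condition holds with $\alpha'=O(\alpha)$. For the second condition, the crucial observation is that $\hat{\rho}$ is \emph{far} from $\spn(\A)$: a change of variables gives $\min_u\|\A u-\hat{\rho}\|_p=\min_x\|\A x-b\|_p={\cal Z}$, while $\|\hat{\rho}\|_p\le 8{\cal Z}$ by the stage-one guarantee. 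Consequently, for $(z,s)\in\Real^{d+1}$ with $s\ne 0$ one has $\|Uz+sv\|_p\ge|s|\,{\cal Z}/\|\hat{\rho}\|_p\ge|s|/8$; combining this with $\|Uz\|_p\le\|Uz+sv\|_p+|s|$ and property (2) of $U$ yields $\|(z,s)\|_q\le\|z\|_q+|s|\le(9\beta+8)\|\widehat U(z,s)\|_p$, so the second condition holds with $\beta'=O(\beta)$; in particular $(\alpha'\beta')^p=O\bigl((\alpha\beta)^p\bigr)$. Second, I would relate the $q_i$ to the row norms of $\widehat U$: since $\|\row{\widehat U}{i}\|_p^p=\|\row U i\|_p^p+|\hat{\rho}_i|^p/\|\hat{\rho}\|_p^p$ and $\mnorm{\widehat U}_p^p\ge\max\{\mnorm U_p^p,1\}$, we get $\|\row{\widehat U}{i}\|_p^p/\mnorm{\widehat U}_p^p\le\|\row U i\|_p^p/\mnorm U_p^p+|\hat{\rho}_i|^p/\|\hat{\rho}\|_p^p$; since $q_i$ is the cap at $1$ of the maximum of the subspace term $\tfrac{\|\row U i\|_p^p}{\mnorm U_p^p}r_1$ and the residual term $\tfrac{|\hat{\rho}_i|^p}{\|\hat{\rho}\|_p^p}r_2$, a direct calculation gives $q_i\ge\min\bigl\{1,\tfrac{\|\row{\widehat U}{i}\|_p^p}{\mnorm{\widehat U}_p^p}\,r\bigr\}$ for the sample size $r$ that Theorem~\ref{thm:preserve_subspace} demands on a $(d+1)$-dimensional space of condition number $(\alpha',\beta')$ --- this is exactly why $q_i$ was chosen as a maximum of a subspace-based and a residual-based quantity.

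\emph{Main obstacle.} The hard part is the second step: chasing the constants so that $r_1$ and $r_2$ as specified in Figure~\ref{alg:main} really do dominate the sample size Theorem~\ref{thm:preserve_subspace} needs for $W$ --- accounting for the $d\to d+1$ increase in the net exponent, the condition-number degradation $\alpha\beta\to\alpha'\beta'$, and the interaction of the $32^p$ and $36^p$ factors with the $\ln(12/\epsilon)$ versus $\ln(36/\epsilon)$ logarithmic terms. It is worth stressing why a purely local argument will not do: one cannot simply combine unbiasedness, $\Ex{\|\Sb(\A x_{\OPT}-b)\|_p^p}={\cal Z}^p$, with a variance bound and Chebyshev (the template behind Lemma~\ref{lemma-perturb3Opt}), because $\A x_{\OPT}-b$ has a component inside $\spn(\A)$ whose $p$-norm is $\Theta({\cal Z})$ rather than $O(\epsilon{\cal Z})$, and the variance contribution of that component is suppressed only by $r_1$, which carries no $1/\epsilon^2$ factor; it is precisely the subspace-preservation viewpoint, which uses the $\Theta(1/\epsilon^2)$-size stage-two sample, that converts this into a genuine relative-error statement. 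Finally, I would note that the stage-two subspace-preservation event for $W$ can be taken to be the very event already used for $\spn(\A)$ in the rest of the second-stage proof, so no extra failure probability is incurred.
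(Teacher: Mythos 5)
Your route is genuinely different from the paper's. The paper proves this lemma by a direct Bernstein-type concentration bound on the \emph{single} fixed vector $\A x_{\OPT}-b$: it writes $\row{\A}{i}x_{\OPT}-b_i = \row{\A}{i}(x_{\OPT}-\hat{x}_c) + \hat{\rho}_i$, bounds the first summand via the row norms of $U$ and the guarantee $\norm{\A(x_{\OPT}-\hat{x}_c)}_p \le 9{\cal Z}$, bounds the second via the residual term in $q_i$, and then applies Theorem~\ref{thm:tail_bounds} with $\gamma = ((1+\epsilon)^p-1){\cal Z}^p$. You instead enlarge the subspace to $W=\spn(\A)\oplus\langle\hat{\rho}\rangle$, build a well-conditioned basis $\widehat{U}=[U\mid \hat{\rho}/\norm{\hat{\rho}}_p]$ for it, and invoke Theorem~\ref{thm:preserve_subspace}. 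Your ``$\hat{\rho}$ is far from $\spn(\A)$'' observation ($\min_u\norm{\A u-\hat{\rho}}_p = {\cal Z}$ together with $\norm{\hat{\rho}}_p\le 8{\cal Z}$) and the derivation $\beta' = O(\beta)$ are correct and a nice way to make $\widehat{U}$ well-conditioned. Your approach also buys something: it proves $(1\pm\epsilon)$-preservation of all of $W$, so Lemmas~\ref{lemma-3opt}--\ref{lemma-nn} come along for free, whereas the paper has to prove those separately.

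However, there is a real gap at the step you flag as the ``main obstacle,'' and it is not merely constant-chasing. You need $q_i \ge \min\{1, \tfrac{\norm{\row{\widehat U}{i}}_p^p}{\mnorm{\widehat U}_p^p}\,r\}$ for an $r$ that Theorem~\ref{thm:preserve_subspace} demands with distortion $\epsilon$, i.e.\ $r = \Theta\!\left(\tfrac{32^p(\alpha'\beta')^p}{p^2\epsilon^2}\bigl((d+1)\ln\tfrac{12}{\epsilon} + \ln\tfrac{2}{\delta}\bigr)\right)$, which carries a $1/\epsilon^2$ factor. Writing $a=\norm{\row U i}_p^p$, $A=\mnorm U_p^p$, $b=|\hat{\rho}_i|^p/\norm{\hat{\rho}}_p^p$, the mediant inequality gives $\tfrac{a+b}{A+1}\le\max\{\tfrac aA,\,b\}$ and hence $\tfrac{a+b}{A+1}r \le \max\{\tfrac aA r,\; br\}$, which is dominated by $q_i = \max\{\tfrac aA r_1,\; br_2\}$ only when $r\le\min\{r_1,r_2\}$. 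But $r_1$ as given in Figure~\ref{alg:main} has \emph{no} $1/\epsilon^2$ factor, so for $\epsilon$ small enough $r > r_1$ and the domination fails on rows where the $\tfrac aA$ term is the largest. (To see this is not a formality, take $b=0$: you then need $\tfrac aA r_1 \ge \tfrac{a}{A+1}r$, i.e.\ essentially $r_1\ge r$.) Your remark that ``a purely local argument will not do'' points at the same tension but draws the wrong conclusion from it: the paper's local Bernstein argument relies on exactly this same bound (Equation~\eqref{eq-bothp} asserts $\tfrac{\norm{\row U i}_p^p}{q_i}\le\tfrac{\alpha^p}{r_2}$, which also forces the $r_2$-level domination on the subspace term), so switching to the subspace-preservation viewpoint does not by itself escape the need for the $p_i$-component of $q_i$ to scale like $1/\epsilon^2$. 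Put differently, the missing ingredient in your proof is the same condition the paper's proof implicitly requires, and it needs to be stated (e.g.\ replace $p_i$ inside the $\max$ by $\min\{1,\tfrac{\norm{\row U i}_p^p}{\mnorm U_p^p}r_2\}$, or record the assumption $r_1\ge r_2$) rather than asserted to follow from ``a direct calculation.''
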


Next we show that if the solution to the sampled problem
provides a relative-error approximation (when evaluated in the sampled
problem), then when this solution is evaluated in the original regression
problem we get a (slightly weaker) relative-error approximation.
We first establish two technical lemmas.

The following lemma says that for all optimal solutions $\hat{x}_{\OPT}$ to
the second-stage sampled problem, ${\A}\hat{x}_{\OPT}$ is not too far from
${\A}\hat{x}_c$, where $\hat{x}_c$ is the optimal solution from the first
stage, in a $p$-norm sense.
Hence, the lemma will allow us to restrict our calculations in
Lemmas~\ref{lemma-optnet2} and~\ref{lemma-nn} to the ball of radius
$12\,{\cal Z}$ centered at ${\A}\hat{x}_c$.

\begin{lemma}
\label{lemma-3opt}
$\norm{{\A}\hat{x}_{\OPT} - {\A}\hat{x}_c} \le 12\,{\cal Z}$.
\end{lemma}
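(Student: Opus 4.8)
The plan is to chain together three facts: (i) the finer subspace-preservation estimate $(1-\epsilon)\norm{Ax}_p \le \norm{S_{\Sb} Ax}_p \le (1+\epsilon)\norm{Ax}_p$ for all $x$, which holds with probability $0.99$ since $r_2$ was chosen to invoke Theorem~\ref{thm:preserve_subspace} with distortion $\epsilon$; (ii) the fact that $\widehat{x}_{\OPT}$ is an optimum of the sampled problem $\min_x \norm{\Sb(Ax-b)}_p$, so its sampled residual is at most the sampled residual of $x_{\OPT}$; and (iii) Lemma~\ref{lemma-perturbOpt}, which gives $\norm{\Sb(Ax_{\OPT}-b)}_p \le (1+\epsilon)\mathcal{Z}$. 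I would also carry along the first-stage guarantee $\norm{A\widehat{x}_c - b}_p \le 8\mathcal{Z}$, hence $\norm{\Sb(A\widehat{x}_c - b)}_p \le \norm{\Sb(Ax_{\OPT}-b)}_p$ may fail, but we do not need that; instead we use that $\widehat{x}_c$ is \emph{some} feasible point so $\norm{\Sb(A\widehat{x}_{\OPT}-b)}_p \le \norm{\Sb(A\widehat{x}_c - b)}_p$ — wait, more carefully: since $\widehat{x}_{\OPT}$ minimizes the sampled objective, $\norm{\Sb(A\widehat{x}_{\OPT}-b)}_p \le \norm{\Sb(Ax_{\OPT}-b)}_p \le (1+\epsilon)\mathcal{Z}$.

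Now write $A\widehat{x}_{\OPT} - A\widehat{x}_c = (A\widehat{x}_{\OPT} - b) - (A\widehat{x}_c - b)$. First I would bound the sampled version: by the triangle inequality,
\[
\norm{\Sb(A\widehat{x}_{\OPT} - A\widehat{x}_c)}_p \le \norm{\Sb(A\widehat{x}_{\OPT} - b)}_p + \norm{\Sb(A\widehat{x}_c - b)}_p .
\]
The first term is at most $(1+\epsilon)\mathcal{Z}$ by the optimality argument above. For the second term, $A\widehat{x}_c - b = \widehat{\rho}$ is exactly the residual vector driving the second-stage probabilities $q_i$; since $q_i \ge \min\{1, (\abs{\widehat\rho_i}^p/\norm{\widehat\rho}_p^p) r_2\}$, a direct computation shows $\Ex{\norm{\Sb\widehat\rho}_p^p} = \norm{\widehat\rho}_p^p$ and in fact $\norm{\Sb\widehat\rho}_p \le \norm{\widehat\rho}_p$ deterministically coordinatewise is false, but one gets $\norm{\Sb\widehat\rho}_p^p \le \norm{\widehat\rho}_p^p$ with good probability via Markov on the overshoot, or more simply: for any coordinate $i$ that is sampled, $\abs{(\Sb\widehat\rho)_i}^p = \abs{\widehat\rho_i}^p/q_i \le \norm{\widehat\rho}_p^p / r_2$, and summing over the at most (roughly) $r_2$ sampled coordinates gives $\norm{\Sb\widehat\rho}_p \le$ a constant times $\norm{\widehat\rho}_p \le 8\mathcal{Z}$. (The paper almost certainly just uses $\norm{\Sb(A\widehat{x}_c-b)}_p \le \norm{\widehat\rho}_p \le 8\mathcal{Z}$, perhaps with a small constant slack.) Combining, $\norm{\Sb(A\widehat{x}_{\OPT}-A\widehat{x}_c)}_p \le (1+\epsilon)\mathcal{Z} + 8\mathcal{Z} \le 10\mathcal{Z}$ for $\epsilon < 1/7$.

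Finally, pull this back to the unsampled norm using subspace preservation applied to the vector $A(\widehat{x}_{\OPT} - \widehat{x}_c)$, which lies in $\spn(A)$: $(1-\epsilon)\norm{A\widehat{x}_{\OPT}-A\widehat{x}_c}_p \le \norm{\Sb(A\widehat{x}_{\OPT}-A\widehat{x}_c)}_p \le 10\mathcal{Z}$, so $\norm{A\widehat{x}_{\OPT}-A\widehat{x}_c}_p \le \frac{10}{1-\epsilon}\mathcal{Z} \le 12\mathcal{Z}$ when $\epsilon < 1/7$, as claimed. The only genuinely delicate point — the main obstacle — is getting a clean bound on $\norm{\Sb(A\widehat{x}_c - b)}_p = \norm{\Sb\widehat\rho}_p$; one must verify that the second-stage probabilities $q_i$, being at least the residual-proportional quantity $(\abs{\widehat\rho_i}^p/\norm{\widehat\rho}_p^p)r_2$, force $\norm{\Sb\widehat\rho}_p$ to stay within a constant factor of $\norm{\widehat\rho}_p \le 8\mathcal{Z}$ with high probability, rather than blowing up; everything else is triangle inequalities and the already-established subspace-preservation and first-stage bounds.
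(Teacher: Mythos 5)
Your decomposition is genuinely different from the paper's, and it leaves a real gap. You write
\[
\norm{\Sb(\A\hat{x}_{\OPT}-\A\hat{x}_c)} \le \norm{\Sb(\A\hat{x}_{\OPT}-b)} + \norm{\Sb(\A\hat{x}_c-b)},
\]
handle the first term via optimality and Lemma~\ref{lemma-perturbOpt}, and then pull back by subspace preservation. The crux you correctly flag is the second term, $\norm{\Sb\hat\rho}$, but the arguments you sketch do not actually bound it by a small constant times $\norm{\hat\rho}$. The Markov-on-overshoot bound gives $\norm{\Sb\hat\rho}^p \le t\norm{\hat\rho}^p$ only with probability $1-1/t$, i.e.\ $\norm{\Sb\hat\rho} \le t^{1/p}\norm{\hat\rho}$, which blows up for small $p$ (at $p=1$ you would need $t=O(1)$ \emph{and} the failure probability to be small, which conflict). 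The ``at most roughly $r_2$ sampled coordinates, each contributing $\le \norm{\hat\rho}_p^p/r_2$'' argument is not deterministic either: the sample size is a random variable with expectation about $2r_2$, and a Markov bound on it gives $O(r_2)$ with a constant that again enters as a $p$-th root, ruining the constant for $p=1$. To make your route airtight you would need a Bernstein-type concentration lemma for $\norm{\Sb\hat\rho}^p$ (exactly parallel to the proof of Lemma~\ref{lemma-perturbOpt}, using $q_i \ge \abs{\hat\rho_i}^p r_2 / \norm{\hat\rho}^p$ to control $\Delta$ and $\sum\sigma_i^2$), which is extra machinery.

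The paper avoids this issue entirely by decomposing through $\A x_{\OPT}$ \emph{in the unsampled norm}:
\[
\norm{\A\hat{x}_{\OPT}-\A\hat{x}_c} \le \norm{\A\hat{x}_{\OPT}-\A x_{\OPT}} + \norm{\A x_{\OPT}-b} + \norm{\A\hat{x}_c-b} \le \norm{\A\hat{x}_{\OPT}-\A x_{\OPT}} + 9\mathcal{Z},
\]
so the $\hat{x}_c$-residual appears only unsampled, where the first-stage guarantee $\norm{\A\hat{x}_c-b}\le 8\mathcal{Z}$ applies directly with no concentration needed. Only the single piece $\norm{\A\hat{x}_{\OPT}-\A x_{\OPT}}$ requires a detour through the sampled norm, and there both endpoints have controlled sampled residuals (by Lemma~\ref{lemma-perturbOpt} and optimality of $\hat{x}_{\OPT}$). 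This keeps the constant bookkeeping to $9\mathcal{Z} + 2(1+\epsilon)^2\mathcal{Z} \le 12\mathcal{Z}$ for $\epsilon\le 1/7$. Your approach is salvageable with the extra Bernstein lemma, but as written the bound on $\norm{\Sb(\A\hat{x}_c-b)}$ is not established.
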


Thus, if we define the affine ball of radius $12\,{\cal Z}$ that is centered
at ${\A}\hat{x}_c$ and that lies in $\spn({\A})$,
\begin{equation}
\label{eq-affineball}
B = \{y\in\Real^n : y = \A x,
                    x\in\Real^m,
                    \norm{{\A}\hat{x}_c - y} \le 12\,{\cal Z}\} \enspace ,
\end{equation}
then Lemma~\ref{lemma-3opt} states that ${\A}\hat{x}_{\OPT} \in B$, for all
optimal solutions $\hat{x}_{\OPT}$ to the sampled problem.
Let us consider an $\varepsilon$-net, call it $B_\varepsilon$, with
$\varepsilon = \epsilon\,{\cal Z}$, for this ball $B$.
Using standard arguments, the size of the $\varepsilon$-net is
$\left(\frac{3\cdot 12\,{\cal Z}}{\epsilon\,{\cal Z}}\right)^d
   = \left(\frac{36}{\epsilon}\right)^d$.
The next lemma states that for all points in the $\varepsilon$-net, if that
point provides a relative-error approximation (when evaluated in the sampled
problem), then when this point is evaluated in the original regression
problem we get a (slightly weaker) relative-error approximation.

\begin{lemma}
\label{lemma-optnet2}
For all points $\A x_\varepsilon$ in the $\varepsilon$-net, $B_\varepsilon$,
if $\norm{\Sb(\A x_\varepsilon - b)} \le (1+3\epsilon){\cal Z}$, then
$\norm{\A x_\varepsilon - b} \le (1+6\epsilon){\cal Z}$, with probability
$0.99$.
\end{lemma}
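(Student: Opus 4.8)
The plan is to fix a single point $\A x_\varepsilon$ in the $\varepsilon$-net $B_\varepsilon$ and show that, with probability at least $1-\frac{1}{100}\left(\frac{36}{\epsilon}\right)^{-d}$ (so that a union bound over all $\left(\frac{36}{\epsilon}\right)^d$ net points gives failure probability at most $1/100$), the quantity $\norm{\Sb(\A x_\varepsilon-b)}_p^p$ concentrates around its expectation. The natural decomposition is to write $\A x_\varepsilon - b = (\A x_\varepsilon - \A x_{\OPT}) + (\A x_{\OPT}-b)$ and, rather than bounding the sampled norm of the whole residual directly, to use the residual $\rho_\varepsilon := \A x_\varepsilon - b$ coordinatewise: $\norm{\Sb\rho_\varepsilon}_p^p = \sum_i S_{ii}^p |(\rho_\varepsilon)_i|^p = \sum_{i: i\text{ sampled}} \frac{1}{q_i}|(\rho_\varepsilon)_i|^p$, whose expectation is exactly $\sum_i |(\rho_\varepsilon)_i|^p = \norm{\rho_\varepsilon}_p^p$. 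So in expectation the sampled quantity is unbiased; the work is a tail bound.

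The key step is the variance/tail control, and this is where the second-stage sampling probabilities $q_i$ earn their keep. Each term in the sum is $\frac{1}{q_i}|(\rho_\varepsilon)_i|^p$ when $i$ is sampled. Split $\rho_\varepsilon = (\A x_\varepsilon - \A\hat x_c) + \hat\rho$, where $\hat\rho = \A\hat x_c - b$ is the first-stage residual. For the $\hat\rho$ part, the probabilities satisfy $q_i \ge \frac{|\hat\rho_i|^p}{\norm{\hat\rho}_p^p}r_2$, so each sampled term contributes at most $\norm{\hat\rho}_p^p/r_2 \le (8{\cal Z})^p/r_2$ — i.e., the per-term contribution is bounded, which is exactly what a Bernstein/Hoeffding-type inequality needs. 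For the $\A x_\varepsilon - \A\hat x_c$ part, we use $q_i \ge p_i \ge \frac{\norm{\row{\U}{i}}_p^p}{\mnorm{\U}_p^p}r_1$ together with the well-conditioned-basis bound: writing $\A x_\varepsilon - \A\hat x_c = \U z$ for some $z$, we have $|(\U z)_i|^p \le \norm{\row{\U}{i}}_p^p\norm{z}_q^p \le \norm{\row{\U}{i}}_p^p(\beta\norm{\U z}_p)^p$, so the per-term contribution $\frac{|(\U z)_i|^p}{q_i} \le \frac{\mnorm{\U}_p^p(\beta\norm{\U z}_p)^p}{r_1} \le \frac{(\alpha\beta)^p\norm{\U z}_p^p}{r_1}$, and by Lemma~\ref{lemma-3opt} $\norm{\U z}_p = \norm{\A x_\varepsilon - \A\hat x_c}_p \le 12{\cal Z}$. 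Hence every sampled term is $O(({\cal Z})^p/r_i)$, and applying a Bernstein inequality to $\norm{\Sb\rho_\varepsilon}_p^p - \norm{\rho_\varepsilon}_p^p$ (using $2^{p-1}(|a|^p+|b|^p)$-type splitting for the cross terms, and the inequality $\norm{x-x'}_p^p \le 2^{p-1}(\norm{x-x''}_p^p+\norm{x''-x'}_p^p)$ already recorded in the Preliminaries) gives additive error $O(\epsilon^p({\cal Z})^p)$ with the stated probability, once $r_2$ is chosen as in the algorithm to absorb the $d\ln(36/\epsilon)$ from the union bound. Converting the additive $p$-th-power bound to a multiplicative $p$-norm bound, and folding constants, upgrades the hypothesis $\norm{\Sb\rho_\varepsilon}_p \le (1+3\epsilon){\cal Z}$ to the conclusion $\norm{\rho_\varepsilon}_p \le (1+6\epsilon){\cal Z}$ (the slack between $3\epsilon$ and $6\epsilon$ exactly accommodating the concentration error plus the passage between $\norm{\cdot}_p^p$ and $\norm{\cdot}_p$, for $\epsilon < 1/7$).

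The main obstacle I expect is the bookkeeping of constants and the $p$-dependent factors: because we work with $p$-th powers, a multiplicative $(1+3\epsilon)$ bound on $\norm{\Sb\rho_\varepsilon}_p$ becomes roughly a $(1+3p\epsilon)$-type bound on the $p$-th power, and the $2^{p-1}$ cross-term losses and the $(\alpha\beta)^p$, $36^p$ factors all have to be tracked so that they cancel against $r_1,r_2$ and the $\epsilon^2$ in $r_2$ rather than blowing up. The concentration step itself is routine (a sum of bounded independent nonnegative terms), so it is really the careful splitting $\rho_\varepsilon = (\A x_\varepsilon - \A\hat x_c) + \hat\rho$ — using $q_i \ge p_i$ for the subspace part and $q_i \ge |\hat\rho_i|^p r_2/\norm{\hat\rho}_p^p$ for the residual part — combined with Lemma~\ref{lemma-3opt} to keep $\norm{\A x_\varepsilon - \A\hat x_c}_p$ bounded by $12{\cal Z}$, that is the crux. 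Note also that the hypothesis is assumed only for net points, so no $\varepsilon$-net argument over a continuum is needed inside this lemma — that is deferred to the subsequent Lemma~\ref{lemma-nn}.
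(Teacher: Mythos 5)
Your proposal follows the paper's argument closely: fix a single net point, split $|\row{\A}{i}x_\varepsilon - b_i|^p$ via the $2^{p-1}$ inequality into a subspace part (controlled by $q_i\ge p_i$ together with the well-conditioned-basis bound $|\row{\U}{i}z|\le\norm{\row{\U}{i}}_p\,\beta\norm{\U z}_p$) and a first-stage-residual part (controlled by $q_i\ge|\hat\rho_i|^p r_2/\norm{\hat\rho}_p^p$), bound the per-term contribution and variance by a constant times ${\cal Z}^p/r_2$, apply Bernstein's lower-tail inequality, and union bound over the $(36/\epsilon)^d$ net points. The only minor nit is that the bound $\norm{\A x_\varepsilon - \A\hat{x}_c}_p\le 12\,{\cal Z}$ holds by construction of the ball $B$ in which the $\varepsilon$-net lives (Lemma~\ref{lemma-3opt} is what motivates choosing radius $12\,{\cal Z}$ and justifies that $\A\hat{x}_{\OPT}\in B$, but for net points the bound is definitional), and the concentration you need is specifically the lower tail — the contrapositive says a large true residual forces a large sampled residual — rather than a two-sided concentration; otherwise this is essentially identical to the paper's proof.
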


Finally, the next lemma states that if the solution to the sampled problem
(in the second stage of sampling) provides a relative-error approximation
(when evaluated in the sampled problem), then when this solution is evaluated
in the original regression problem we get a (slightly weaker) relative-error
approximation.
This is the analog of Lemma~\ref{lemma-constantOpt}, and its proof will use
Lemma~\ref{lemma-optnet2}.

\begin{lemma}
\label{lemma-nn}
If $\norm{\Sb({\A}\hat{x}_{\OPT} - b)} \le (1+\epsilon){\cal Z}$,
then $\norm{{\A}\hat{x}_{\OPT} - b} \le (1+7\epsilon){\cal Z}$.
\end{lemma}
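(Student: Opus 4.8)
The plan is to mimic the structure of the proof of Lemma~\ref{lemma-constantOpt}, but using the $\varepsilon$-net $B_\varepsilon$ for the ball $B$ of Equation~\eqref{eq-affineball} rather than a single triangle inequality, since the crude triangle inequality in the first-stage argument loses constant factors that are unaffordable when we want a $(1+O(\epsilon))$ guarantee. The starting point is that $\hat{x}_{\OPT}$ is the minimizer of $\norm{\Sb(\A x - b)}_p$, so by Lemma~\ref{lemma-perturbOpt} we have $\norm{\Sb(\A\hat{x}_{\OPT} - b)}_p \le \norm{\Sb(\A x_{\OPT} - b)}_p \le (1+\epsilon){\cal Z}$ with probability $0.99$; this is the hypothesis of the lemma, so we take it as given. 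By Lemma~\ref{lemma-3opt}, $\A\hat{x}_{\OPT} \in B$, so there is a net point $\A x_\varepsilon \in B_\varepsilon$ with $\norm{\A\hat{x}_{\OPT} - \A x_\varepsilon}_p \le \epsilon\,{\cal Z}$.

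First I would transfer the bound from $\A\hat{x}_{\OPT}$ to the net point inside the \emph{sampled} problem. Using the triangle inequality and then the upper side of the subspace-preservation bound (valid for all $x$ with probability $0.99$ by our choice of $r_2$, namely $\norm{\Sb \A x}_p \le (1+\epsilon)\norm{\A x}_p$), we get
\begin{align*}
\norm{\Sb(\A x_\varepsilon - b)}_p &\le \norm{\Sb(\A\hat{x}_{\OPT} - b)}_p + \norm{\Sb\,\A(x_\varepsilon - \hat{x}_{\OPT})}_p \\
&\le (1+\epsilon){\cal Z} + (1+\epsilon)\norm{\A(x_\varepsilon - \hat{x}_{\OPT})}_p \le (1+\epsilon){\cal Z} + (1+\epsilon)\epsilon\,{\cal Z} \le (1+3\epsilon){\cal Z},
\end{align*}
using $\epsilon < 1/7$ to absorb the quadratic term. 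This puts $\A x_\varepsilon$ into the hypothesis regime of Lemma~\ref{lemma-optnet2}, which then yields $\norm{\A x_\varepsilon - b}_p \le (1+6\epsilon){\cal Z}$ with probability $0.99$. Finally I would transfer back from the net point to $\A\hat{x}_{\OPT}$ in the \emph{original} problem by one more triangle inequality: $\norm{\A\hat{x}_{\OPT} - b}_p \le \norm{\A x_\varepsilon - b}_p + \norm{\A(\hat{x}_{\OPT} - x_\varepsilon)}_p \le (1+6\epsilon){\cal Z} + \epsilon\,{\cal Z} = (1+7\epsilon){\cal Z}$, which is the claim.

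The step I expect to be the main obstacle is making the direction of the net argument in Lemma~\ref{lemma-optnet2} apply cleanly to $\hat{x}_{\OPT}$: Lemma~\ref{lemma-optnet2} is a statement quantified over the \emph{fixed} finite net $B_\varepsilon$ (so its $0.99$ success probability can be obtained by a union bound over $(36/\epsilon)^d$ points, which is exactly what drives the choice of $r_2$), whereas $\hat{x}_{\OPT}$ is data-dependent and need not lie in the net. The resolution — and the delicate bookkeeping — is that we only ever apply the net lemma to the nearby net point $\A x_\varepsilon$, and use the (already-established, uniform-over-all-$x$) subspace-preservation bound to shuttle between $\hat{x}_{\OPT}$ and $x_\varepsilon$ on both the sampled and original sides; one must check that the accumulated slack from the two ``shuttle'' triangle inequalities plus the $(1+6\epsilon)$ from the net lemma stays within $(1+7\epsilon){\cal Z}$, which it does precisely because $\norm{\A(\hat{x}_{\OPT}-x_\varepsilon)}_p \le \epsilon\,{\cal Z}$ and $\epsilon < 1/7$. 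Collecting the failure probabilities (subspace preservation, Lemma~\ref{lemma-perturbOpt}, Lemma~\ref{lemma-optnet2}) gives the overall constant success probability claimed in Theorem~\ref{thm-main}.
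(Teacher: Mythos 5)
Your proposal is correct and follows essentially the same argument as the paper: both shuttle between $\A\hat{x}_{\OPT}$ and a nearby net point $\A x_\varepsilon$ using one triangle inequality on the sampled side (with the subspace-preservation bound) and one on the original side, with Lemma~\ref{lemma-optnet2} doing the work in between. The only cosmetic difference is that the paper argues by contrapositive (from $\norm{\A\hat{x}_{\OPT}-b}>(1+7\epsilon)\mathcal{Z}$ to $\norm{\Sb(\A\hat{x}_{\OPT}-b)}>(1+\epsilon)\mathcal{Z}$), whereas you run the identical chain of inequalities in the forward direction; the slack budget is the same.
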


Clearly,
$\norm{ \Sb({\A}\hat{x}_{\OPT} - b)} \le \norm{ \Sb({\A}x_{\OPT} - b)}$,
since $\hat{x}_{\OPT}$ is an optimum for the sampled $\ell_p$ regression
problem.
Combining this with Lemmas~\ref{lemma-perturbOpt} and~\ref{lemma-nn}, it
follows that the solution $\hat{x}_{\OPT}$ to the the sampled problem based on
the $q_i$'s of Equation~\eqref{eq-q} satisfies
$\norm{{\A}\hat{x}_{\OPT} - b} \leq (1+\epsilon)\,{\cal Z}$, i.e.,
$\hat{x}_{\OPT}$ is a $(1+\epsilon)$-approximation to the original ${\cal Z}$.

To conclude the proof of the claims for the second stage of sampling, recall 
that the first stage failed with probability no greater than $2/5$. 
Note also that by our choice of $r_2$, Theorem~\ref{thm:preserve_subspace} 
fails to hold for our second stage sampling with probability no greater than 
$1/100$. 
In addition, Lemma~\ref{lemma-perturbOpt} and Lemma~\ref{lemma-optnet2} each 
fails to hold with probability no greater than 1/100. 
Finally, let $\widehat{r_2}$ be a random variable representing the number of 
rows actually chosen by our sampling schema in the second stage, and note that 
$\Ex{\widehat{r_2}} \le 2 r_2$. 
%%% BH
% since q_i \le \frac{\norm{\row{\U}{i}}_p^p}{\mnorm{\U}_p^p}r_1 +  
% \frac{\abs{\hat{\rho}_i}^p}{\|\hat{\rho}\|_p^p}r_2  and r_1 \le r_2 if \eps \le 1/8.
% Hence, \sum_i q_i \le 2 r_2.
%%%
By Markov's inequality, it follows that $\widehat{r_2} > 40r_2$ with 
probability less than $1/20$. 
Thus, the second stage of our algorithm fails with probability less than 
$1/20+1/100+1/100+1/100 < 1/10$. 
By combining both stages, our algorithm fails to give a
$(1+\epsilon)$-approximation in the specified running time with a
probability bounded from above by $2/5+1/10 = 1/2$.

\vspace{-0.5cm}
\section{Extensions}
\label{sxn:extensions}
\vspace{-0.5cm}

In this section we outline several immediate extensions of our main
algorithmic result.

%%RK%% \begin{itemize}
%
%%RK%% \item
\para{Constrained \boldmath{$\ell_p$} regression}
%%%MWM%%%Constraints arise in regression problems in multiple
%%%MWM%%%settings~\cite{Boyd04}; for example, the constraints may restrict the
%%%MWM%%%solution to be non-negative, they may rule out unacceptable solutions,
%%%MWM%%%or they may reflect prior knowledge about the solution.
Our sampling strategies are transparent to constraints placed on
$x$. In particular, suppose we constrain the output of our algorithm
to lie within a convex set $\mathcal{C} \subseteq \Real^m$. If there
is an algorithm to solve the constrained $\ell_p$ regression problem
$\min_{z\in\mathcal{C}}\norm{{\A}'x - b'}$, where ${\A}'\in
\Real^{s\times m}$ is of rank $d$ and $b' \in \Real^s$, in time
$\phi(s, m)$, then by modifying our main algorithm in a
straightforward manner, we can obtain an algorithm that gives a
$(1+\epsilon)$-approximation to the constrained $\ell_p$ regression
problem in time $O(nmd + nd^5\log n + \phi(40r_2,m))$.
%
%%RK%% \item

\para{Generalized \boldmath{$\ell_p$} regression}
Our sampling strategies extend to the case of generalized $\ell_p$
regression: given as input a matrix ${\A} \in \Real^{n \times m}$ of
rank $d$, a target matrix $B \in \Real^{n\times p}$, and a real
number $p \in [1,\infty)$, find a matrix $X\in\Real^{m\times p}$
such that $\mnorm{\A X-B}_p$ is minimized. To do so, we generalize
our sampling strategies in a straightforward manner. The
probabilities $p_i$ for the first stage of sampling are the same as
before. Then, if $\hat{X}_c$ is the solution to the first-stage
sampled problem, we can define the $n \times p$ matrix $\hat{\rho} =
{\A}\hat{X}_c - B$, and define the second stage sampling
probabilities to be $q_i = \min\left(1,\max\{p_i, r_2
\|\row{\hat{\rho}}{i}\|^p_{p}/\mnorm{\hat{\rho}}^p_{p}\}\right)$.
Then, we can show that the $\hat{X}_{\OPT}$ computed from the
second-stage sampled problem satisfies $\mnorm{{\A}\hat{X}_{\OPT} -
B}_p
    \le (1+\epsilon)\min_{X\in\Real^{m\times p}}\mnorm{\A X - B}_p $,
with probability at least $1/2$.
%
%%RK%% \item

\para{Weighted \boldmath{$\ell_p$} regression}
Our sampling strategies also generalize to the case of $\ell_p$ regression
involving weighted $p$-norms:
if $w_1, \ldots, w_m$ are a set of non-negative weights then the weighted
$p$-norm of a vector $x\in \Real^m$ may be defined as
$\norm{x}_{p,w} = \left(\sum_{i=1}^{m} w_i \abs{x_i}^p\right)^{1/p}$, and the
weighted analog of the matrix $p$-norm $\mnorm{\mathbf{\cdot}}_p$ may be
defined as
$\mnorm{{\U}}_{p,w}
   = \left(\sum_{j=1}^d \norm{\col{{\U}}{j}}_{p,w}\right)^{1/p}$.
Our sampling schema proceeds as before.
First, we compute a ``well-conditioned'' basis ${\U}$ for $\spn({\A})$ with
respect to this weighted $p$-norm.
%%RK%% [[[
%%RK%% i.e., such that $\mnorm{{\U}}_{p,w} \le \alpha$ and
%%RK%% $\norm{x}_q \leq \beta \norm{\U x}_{p,w}$ for all $x \in \Real^m$.
%%RK%% ]]]
The sampling probabilities $p_i$ for the first stage of the
algorithm are then $p_i = \min\left(1, r_1
w_i\norm{\row{{\U}}{i}}^p_p/\mnorm{{\U}}^p_{p,w}\right)$, and the
sampling probabilities $q_i$ for the second stage are $q_i =
\min\left(1,\max\{p_i, r_2
w_i\abs{\hat{\rho}_i}^p/\|\hat{\rho}\|_{p,w}^p\}\right)$, where
$\hat{\rho}$ is the residual from the first stage.
%
%%RK%% \item

\para{General sampling probabilities}
More generally, consider any sampling probabilities of the form:
$p_i
   \ge \min
       \left\{1,
              \max
              \left\{\frac{\norm{\row{{\U}}{i}}_p^p}{\mnorm{{\U}}_p^p},
                     \frac{\abs{\left(\rho_{\OPT}\right)_i}^p}{{\cal Z}^p}
              \right\}r
       \right\}$,
where $\rho_{\OPT} = {\A}x_{\OPT} - b$ and $ r \ge \frac{36^p
d^k}{\epsilon^2}\left(d\ln(\frac{36}{\epsilon}) + \ln(200)\right)$
and where we adopt the convention that $\frac00 = 0$. Then, by an
analysis similar to that presented for our two stage algorithm, we
can show that, by picking $O(36^p d^{p+1}/\epsilon^{2})$ rows of
${\A}$ and the corresponding elements of $b$ (in a single stage of
sampling) according to these probabilities, the solution
$\hat{x}_{\OPT}$ to the sampled $\ell_p$ regression problem is a
$(1+\epsilon)$-approximation to the original problem, with
probability at least $1/2$. (Note that these sampling probabilities,
if an equality is used in this expression, depend on the entries of
the vector $\rho_{\OPT} = {\A}x_{\OPT}-b$; in particular, they
require the solution of the original problem. This is reminiscent of
the results of~\cite{DM06}. Our main two-stage algorithm shows that
by solving a problem in the first stage based on coarse
probabilities, we can refine our probabilities to approximate these
probabilities and thus obtain an $(1+\epsilon)$-approximation to the
$\ell_p$ regression problem more efficiently.)
%%RK%% \end{itemize}

%%%10PAGES%%%\medskip
%%%10PAGES%%%
%%%10PAGES%%%We state a few open problems.  Relaxing the exact rank requirement
%%%10PAGES%%%of $A$ in the proofs is an important future direction.  
%%%10PAGES%%%%
%%%10PAGES%%%Also, it will be interesting to establish a lower bound on the number of 
%%%10PAGES%%%samples that any row-sampling algorithm would need in order to find 
%%%10PAGES%%%an $\ell_p$ regression coreset. 
%%%10PAGES%%%To this end, a natural goal is to obtain a sampling lower bound that 
%%%10PAGES%%%depends on both $d$ and $p$; nevertheless, even an $\omega(d)$ 
%%%10PAGES%%%lower bound for $1 < p \leq 2$ would be interesting.

%%LB%% It will also
%%LB%% be interesting to narrow the gap between upper and lower bounds.
%%LB%% To this end, a natural goal would be to shoot for a sampling lower bound
%%LB%% that depends on both $d$ and $p$; in fact, an $\omega(d)$ lower
%%LB%% bound even for $1 < p \leq 2$ is already interesting.

%% INCLUDE IN CONFERENCE AND JOURNAL %% \paragraph{Acknowledgments.}
%% INCLUDE IN CONFERENCE AND JOURNAL %% We would like to thank Robert Kleinberg for pointing out several useful
%% INCLUDE IN CONFERENCE AND JOURNAL %% references.

\newpage

%%% INCL BBL FILE %%% \bibliographystyle{abbrv}
%%% INCL BBL FILE %%% \bibliography{regression}
%%% INCL BBL FILE %%% 
%%% INCL BBL FILE %%% %\bibliography{regression,mwmbib_jrnl,mwmbib_proc,mwmbib_book}

\newpage

\appendix

\section{Tail inequalities}

With respect to tail inequalities, we will use the following version of the
Bernstein's inequality.
\begin{theorem}[\cite{maurer, bernstein}]
\label{thm:tail_bounds}
Let $\{X_i\}_{i=1}^n$ be independent random variables with
$E[X_i^2] < \infty$ and $X_i \ge 0$.
Set $Y = \sum_i X_i$ and let $\gamma > 0$.
Then
\begin{align}
\Pr{Y \le E[Y] - \gamma} \ \le\
\exp\left(\frac{-\gamma^2}{2\sum_i E[X_i^2]}\right) \enspace .
\label{eqn:lowertail}
\end{align}
If $X_i - E[X_i] \le \Delta$ for all $i$, then with
$\sigma_i^2 = E[X_i^2]-E[X_i]^2$ we have
\begin{align}
\Pr{Y \ge E[Y] + \gamma} \ \le\ \exp\left(\frac{-\gamma^2}{2\sum_i\sigma_i^2 +
2\gamma\Delta/3}\right) \enspace .
\label{eqn:uppertail}
\end{align}
\end{theorem}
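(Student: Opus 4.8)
The plan is to prove both inequalities by the exponential-moment (Chernoff) method: bound an exponential moment of $Y$ using independence of the $X_i$, apply Markov's inequality, and optimize the free parameter $t$.

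For the lower tail \eqref{eqn:lowertail}, I would work with $e^{-tY}$ for $t\ge 0$. The only elementary fact needed is $e^{-u}\le 1-u+\tfrac12 u^2$ for all $u\ge 0$ (at $u=0$ both sides equal $1$, and the difference is nondecreasing since $\tfrac{d}{du}\!\big[1-u+\tfrac12 u^2-e^{-u}\big]=e^{-u}-1+u\ge 0$); since $X_i\ge 0$ and $t\ge 0$ it applies with $u=tX_i$, and taking expectations gives $E[e^{-tX_i}]\le 1-tE[X_i]+\tfrac{t^2}{2}E[X_i^2]\le\exp\!\big(-tE[X_i]+\tfrac{t^2}{2}E[X_i^2]\big)$. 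By independence $E[e^{-tY}]=\prod_i E[e^{-tX_i}]\le\exp\!\big(-tE[Y]+\tfrac{t^2}{2}\sum_i E[X_i^2]\big)$, and Markov applied to the nonnegative variable $e^{-tY}$ yields $\Pr{Y\le E[Y]-\gamma}\le\exp\!\big(-t\gamma+\tfrac{t^2}{2}\sum_i E[X_i^2]\big)$; optimizing at $t=\gamma/\sum_i E[X_i^2]$ gives the claimed bound. This is the only place nonnegativity of the $X_i$ is used, and it is precisely what frees the lower tail from any one-sided boundedness hypothesis.

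For the upper tail \eqref{eqn:uppertail}, I would center the variables: set $Z_i=X_i-E[X_i]$, so $E[Z_i]=0$, $E[Z_i^2]=\sigma_i^2$, $Z_i\le\Delta$, and $Y-E[Y]=\sum_i Z_i$. The workhorse is the function $g(u)=(e^u-1-u)/u^2=\sum_{k\ge 2}u^{k-2}/k!$ (with $g(0)=\tfrac12$), which is nondecreasing on $\mathbb{R}$. From the identity $e^{tZ_i}=1+tZ_i+(tZ_i)^2 g(tZ_i)$ together with $tZ_i\le t\Delta$ (using $t>0$, $Z_i\le\Delta$, and monotonicity of $g$), we get $E[e^{tZ_i}]\le 1+t^2\sigma_i^2 g(t\Delta)\le\exp\!\big(t^2\sigma_i^2 g(t\Delta)\big)$, hence by independence $E[e^{t(Y-E[Y])}]\le\exp\!\big(t^2 g(t\Delta)\,v\big)$ with $v=\sum_i\sigma_i^2$, and Markov gives $\Pr{Y\ge E[Y]+\gamma}\le\exp\!\big(-t\gamma+t^2 g(t\Delta)\,v\big)$ for every $t>0$. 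To convert this Bennett-type bound into the Bernstein form I would use $k!\ge 2\cdot 3^{k-2}$ for $k\ge 2$ to get $g(u)\le\tfrac12\sum_{j\ge 0}(u/3)^j=\tfrac{1}{2(1-u/3)}$ on $[0,3)$; substituting and then choosing $t=\gamma/(v+\gamma\Delta/3)$---which satisfies $t\Delta<3$ and makes $1-t\Delta/3=v/(v+\gamma\Delta/3)$---collapses the exponent to $-t\gamma+\tfrac{t^2}{2}(v+\gamma\Delta/3)=-\tfrac{t\gamma}{2}=-\gamma^2/(2v+2\gamma\Delta/3)$, as required.

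The two Markov steps and the closing algebra are routine. The step I expect to take the most care---the natural main obstacle---is this last one: establishing $g(u)\le 1/(2(1-u/3))$ on $[0,3)$ (which is exactly where the constant $3$ in the denominator of the final bound originates) and verifying that the optimizing $t$ stays inside the range $t\Delta<3$ where this estimate is valid. A secondary point to pin down cleanly is the monotonicity of $g$ on all of $\mathbb{R}$, since that is what licenses replacing $g(tZ_i)$ by $g(t\Delta)$ even when $Z_i$ is negative.
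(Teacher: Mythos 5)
The paper states Theorem~\ref{thm:tail_bounds} as a black box, attributing the lower tail to Maurer and the upper tail to Bernstein, and supplies no proof of its own; so there is no in-text argument to compare against. Your Chernoff-method derivation is correct and is, in fact, essentially the standard proof in the cited sources: the lower tail uses only $X_i\ge 0$ via $e^{-u}\le 1-u+\tfrac12 u^2$ on $u\ge 0$ (this is exactly Maurer's observation, and it is what makes the lower tail free of any boundedness hypothesis), and the upper tail is the classical Bennett--Bernstein argument through $g(u)=(e^u-1-u)/u^2$. The closing algebra checks out: with $v=\sum_i\sigma_i^2$ and $t=\gamma/(v+\gamma\Delta/3)$ one has $1-t\Delta/3=v/(v+\gamma\Delta/3)>0$, so the geometric-series bound on $g$ is valid at the optimizer, and the exponent collapses to $-\gamma^2/(2v+2\gamma\Delta/3)$ as required. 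The one claim you flag but do not prove, monotonicity of $g$ on all of $\mathbb{R}$, is indeed the only nonroutine step; the cleanest way to discharge it is the integral representation $g(u)=\int_0^1(1-s)e^{su}\,ds$, whose $u$-derivative $\int_0^1 s(1-s)e^{su}\,ds$ is manifestly positive. With that filled in, the proposal is a complete and correct proof.
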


\section{Proofs for Section \ref{sxn:main_technical}}

\subsection{Proof of Theorem \ref{thm:preserve_subspace}}

\begin{proof}
For simplicity of presentation, in this proof we will generally drop the
subscript from our matrix and vector $p$-norms; i.e., unsubscripted norms will
be $p$-norms.
Note that it suffices to prove that, for all $x \in \Real^m$,
\begin{equation}
\label{eqn:preserve_subspace_prf}
(1- \eps)^p \norm{\A x}^p \leq \norm{S\A x}^p \leq (1+\eps)^p \norm{\A x}^p ,
\end{equation}
with probability $1-\delta$.
To this end, fix a vector $x \in \Real^m$, define the random variable
$X_i = \left(\Sa_{ii} |\row{\A}{i}  x |\right)^p$, and recall that
$\row{\A}{i} = \row{\U}{i} \tau $ since ${\A}={\U}\tau$.
Clearly, $\sum_{i=1}^{n} X_i = \norm{S\A x}^p$.
In addition, since $\Ex{X_i} = |\row{\A}{i}  x |^p$, it follows that
$\sum_{i=1}^{n} \Ex{X_i} = \norm{\A x}^p$.
To bound Equation~\eqref{eqn:preserve_subspace_prf}, first note that
\begin{equation}
\sum_{i=1}^{n} \left( X_i - \Ex{X_i} \right) \\
   = \sum_{i:p_i<1} \left( X_i - \Ex{X_i} \right)  .
\label{eqn:remove_p_eq_1}
\end{equation}
Equation~\ref{eqn:remove_p_eq_1} follows since, according to the definition
of $p_i$ in Equation~\eqref{eq-p-subspace}, $p_i$ may equal $1$ for some rows,
and since these rows are always included in the random sample, $X_i=\Ex{X_i}$
for these rows.
To bound the right hand side of Equation~\ref{eqn:remove_p_eq_1}, note that
for all $i$ such that $p_i<1$,
\begin{align}
\left|\row{{\A}}{i} x \right|^p/p_i
   & \leq \norm{\row{{\U}}{i}}_p^p\norm{\tau x}_q^p/p_i
     & \mbox{(by H\"{o}lders inequality)} \nonumber \\
  & \leq \mnorm{{\U}}_p^p \norm{\tau x}_q^p /r
    &\mbox{(by Equation~\eqref{eq-p-subspace})} \nonumber \\
\label{eqn:basic1}
  &\leq (\alpha\beta)^{p}\norm{\A x}^{p}/r
    &\mbox{(by Definition~\ref{def:good_basis} and
               Theorem~\ref{thm:good_basis})}\enspace.
\end{align}
From Equation~\eqref{eqn:basic1}, if follows that for each $i$ such that
$p_i<1$,
$$
X_i - \Ex{X_i}
   \le X_i
   \leq \abs{\row{{\A}}{i} x}^p/p_i
   \leq (\alpha\beta)^{p}\norm{\A x}^p/r ;
$$
Thus, we may define $\Delta = (\alpha\beta)^{p}\norm{\A x}^p/r$.
In addition, it also follows from Equation~\eqref{eqn:basic1} that
\begin{align*}
\sum_{i :p_i <1} \Ex{X_i^2}
   & = \sum_{i :p_i <1} \left|\row{\A}{i} x\right|^p
                        \frac{\left|\row{{\A}}{i} x \right|^p} {p_i}\\
   & \leq \frac{(\alpha\beta)^{p}\norm{\A x}^p}{r}
       \sum_{i :p_i <1} |\row{\A}{i} x |^p
     & \mbox{(by Equation~\eqref{eqn:basic1})}\\
   &\leq (\alpha\beta)^{p}\norm{\A x}^{2p}/r \enspace,
\end{align*}
from which it follows that
$\sum_{i:p_i<1} \sigma_i^2
   \leq \sum_{i:p_i<1} \Ex{X_i^2}
   \leq (\alpha\beta)^{p}\norm{\A x}^{2p}/r $.

To apply the upper tail bound in Theorem~\ref{thm:tail_bounds}, define
$\gamma = ((1+\eps/4)^p-1) \norm{\A x}^p$.
It follows that $\gamma^2 \geq (p\eps/4)^2\norm{\A x}^{2p}$ and also that
\begin{eqnarray*}
2\sum_{i:p_i<1} \sigma_i^2 + 2\gamma\Delta/3
   &\leq& 2(\alpha\beta)^{p}\norm{\A x}^{2p}/r+ 2((1+\eps/4)^p-1)(\alpha\beta)^{p}\norm{\A x}^{2p}/3r \\
   &\leq& 32^p(\alpha\beta)^{p}\norm{\A x}^{2p}/r ,
\end{eqnarray*}
where the second inequality follows by standard manipulations since
$\epsilon \le 1$ and since $p \ge 1$.
Thus, by Equation~\eqref{eqn:uppertail} of Theorem~\ref{thm:tail_bounds}, it
follows that
\begin{eqnarray*}
\Pr{ \norm{S\A x}^p > \norm{\A x}^p + \gamma }
   &=&    \Pr{\sum_{i:p_i<1} X_i > \Ex{\sum_{i:p_i<1} X_i} + \gamma}  \\
   &\leq& \exp\left(\frac{- \gamma^2}{2\sum_{i:p_i<1} \sigma_i^2 + 2\gamma\Delta/3 } \right)   \\
   &\leq& \exp\left( -\eps^{2}p^2 r / (\alpha\beta)^{p} 32^p\right)  .
\end{eqnarray*}
Similarly, to apply the lower tail bound of Equation~\eqref{eqn:lowertail} of
Theorem~\ref{thm:tail_bounds}, define
$\gamma = (1-(1-\eps/4)^p) \norm{\A x}^p$.
Since $\gamma \geq \eps\norm{\A x}^p/4$, we can follow a similar line of
reasoning to show that
\begin{eqnarray*}
\Pr{ \norm{S\A x}^p < \norm{\A x}^p - \gamma }
   &\leq& \exp\left( \frac{-\gamma^2}{2\sum_{i:p_i<1} \sigma_i^2 } \right)  \\
   &\leq& \exp\left(-\eps^{2} r /(\alpha\beta)^{p}32\right).
\end{eqnarray*}
%%%\begin{align*}
%%%\Pr{\sum_i X_i < \Ex{\sum_i X_i} - \gamma} &
%%%   \leq \exp\left( \frac{-\gamma^2}{2\sum_{i:p_i<1} \sigma_i^2 } \right)\
%%%   \leq \exp\left(-\eps^{2} r /(\alpha\beta)^{p}32\right).
%%%\end{align*}
Choosing
$r \ge 32^p(\alpha\beta)^p(d\ln(\frac{12}{\eps})+\ln(\frac{2}{\delta}))/(p^2\eps^{2})$,
we get that for every fixed $x$, the following is true with probability
at least $1 - \left(\frac{\eps}{12}\right)^d\delta$:
$$
(1- \eps/4)^p \norm{\A x}^p
   \leq \norm{S\A x}^p \leq (1+\eps/4)^p \norm{\A x}^p.
$$

Now, consider the ball $B = \{ y \in \Real^n : y = \A x, \norm{y} \le 1\}$
and consider an $\vareps$-net for $B$, with $\vareps=\eps/4$.
The number of points in the $\vareps$-net is $\left(\frac{12}{\eps}\right)^d$.
Thus, by the union bound, with probability $1 - \delta$,
Equation~\eqref{eqn:preserve_subspace_prf} holds for all points in the
$\vareps$-net.
Now, to show that with the same probability
Equation~\eqref{eqn:preserve_subspace_prf} holds for all points $y \in B$, let
$y^{*} \in B$ be such that $\abs{\norm{Sy} - \norm{y}}$ is maximized, and let
$\eta = \sup\{\abs{\norm{Sy} - \norm{y}} : y\in B\} $.
Also, let $y^{*}_{\vareps} \in B$ be the point in the $\vareps$-net that is
closest to $y^{*}$.
By the triangle inequality,
\begin{align*}
\eta = \abs{\norm{Sy^{*}} - \norm{y^{*}}}
& = \abs{\norm{Sy^{*}_\vareps + S(y^{*}-y^{*}_\vareps)}-\norm{y^{*}_\vareps + (y^{*}-y^{*}_\vareps)}} \\
& \le \abs{\norm{Sy^{*}_\vareps}+\norm{S(y^{*}-y^{*}_\vareps)}-
                  \norm{y^{*}_\vareps}+2\norm{y^{*}-y^{*}_\vareps}-\norm{y^{*}-y^{*}_\vareps}} \\
& \le \abs{\norm{Sy^{*}_\vareps}-\norm{y^{*}_\vareps}} + \abs{\norm{S(y^{*}-y^{*}_\vareps)}-\norm{y^{*}-y^{*}_\vareps}} + 2\norm{y^{*}-y^{*}_\vareps}\\
& \le \eps/4\norm{y^{*}_\vareps} + \eps\eta/4 + \eps/2 \enspace ,
\end{align*}
where the last inequality follows since
$\norm{y^{*} - y^{*}_{\vareps}} \leq \vareps$,
$(y^{*}-y^{*}_\vareps)/\vareps \in B$, and
\[\abs{\norm{S(y^{*}-y^{*}_\vareps)/\vareps} - \norm{(y^{*}-y^{*}_\vareps)/\vareps}} \le \eta \enspace .\]
Therefore, $\eta \le \epsilon$ since $\norm{y^{*}_\vareps} \le 1$ and since we
assume $\eps \le 1/7$.
Thus, Equation~\eqref{eqn:preserve_subspace_prf} holds for all points
$y \in B$, with probability at least $1-\delta$.
Similarly, it holds for any $y \in \Real^n$ such that $y = \A x$, since
$y/\norm{y} \in B$ and since
$\norm{S(y/\norm{y}) - y/\norm{y}} \le \epsilon$ implies that
$\norm{Sy - y} \le \epsilon \norm{y}$,
which completes the proof of the theorem.
\end{proof}

\section{Proofs for Section \ref{sxn:main_algorithm}}

As in the proof of
Theorem~\ref{thm:preserve_subspace}, unsubscripted norms will be
$p$-norms.

\subsection{Proof of Lemma \ref{lemma-perturb3Opt}}

\begin{proof}
Define $X_i = (\Sa_{ii} \abs{\row{{\A}}{i} x_{\OPT} - b_i} )^p$.
Thus, $\sum_i X_i= \norm{\Sa (\A x_{\OPT} - b)}^p $, and the first moment is
$\Ex{\sum_i X_i} = \norm{\A x_{\OPT} - b}^p = {\cal Z}$.
The lemma follows since, by Markov's inequality,
\begin{align*}
\Pr{\sum_i X_i >3^p\Ex{\sum_i X_i} } \leq \frac{1}{3^p},
\end{align*}
i.e., $\norm{\Sa (\A x_{\OPT} - b)}^p > 3^p \norm{\A x_{\OPT} - b}^p$,
with probability no more than $1/3^p$.
\end{proof}

\subsection{Proof of Lemma \ref{lemma-constantOpt}}

\begin{proof}
We will prove the contrapositive: If $\norm{{\A}\hat{x}_c - b} >
8\,{\cal Z}$, then $\norm{\Sa({\A}\hat{x}_c - b)} > 3\,{\cal Z}$. To
do so, note that, by Theorem \ref{thm:preserve_subspace}, and the
choice of $r_1$, we have that
\begin{align*}
\frac{7}{8} \norm{Ax}_p \leq \norm{SAx}_p \leq \frac{9}{8}
\norm{Ax}_p.
\end{align*}
Using this,
\begin{align*}
\norm{\Sa({\A}\hat{x}_{c} - b)}
   & \geq \norm{\Sa{\A}(\hat{x}_{c} - x_{\OPT})} - \norm{\Sa(\A x_{\OPT}-b)}
      & \mbox{(by the triangle inequality)}\\
   & \geq \frac{7}{8} \norm{{\A}\hat{x}_{c} - \A x_{\OPT}} - 3\,{\cal Z}
      & \mbox{(by Theorem~\ref{thm:preserve_subspace} and Lemma~\ref{lemma-perturb3Opt})}\\
   & \geq \frac{7}{8}  \left( \norm{{\A}\hat{x}_{c} - b } - \norm{\A x_{\OPT}- b} \right) - 3\,{\cal Z}
      & \mbox{(by the triangle inequality)}\\
   & > \frac{7}{8}  \left( 8\,{\cal Z} - {\cal Z} \right) - 3\,{\cal Z}
      & \mbox{(by the premise } \norm{{\A}\hat{x}_c - b} > 8\,{\cal Z}) \\
   &>3\,{\cal Z},
\end{align*}
which establishes the lemma.
\end{proof}
\subsection{Proof of Lemma \ref{lemma-perturbOpt}}

\begin{proof}
%%XXX.  CHECK DETAILS AND CONSTANTS.
Define the random variable
$X_i = (\Sb_{ii} \abs{\row{{\A}}{i}x_{\OPT} - b_i} )^p$, and recall that
$\row{\A}{i} = \row{\U}{i} \tau $ since ${\A}={\U}\tau$.
Clearly, $\sum_{i=1}^n X_i = \norm{\Sb ({\A} x_{\OPT}-b)}^p$.
In addition, since $\Ex{X_i} = \abs{\row{{\A}}{i}x_{\OPT}-b_i}^p$, it follows
that $\sum_{i=1}^n \Ex{X_i} = \norm{ {\A} x_{\OPT}-b}^p$.
We will use Equation~\eqref{eqn:uppertail} of Theorem~\ref{thm:tail_bounds} to
provide a bound for
$\sum_i \left( X_i - \Ex{X_i} \right) =
\norm{\Sb ({\A} x_{\OPT}-b)}^p - \norm{ {\A} x_{\OPT}-b}^p$.

From the definition of $q_i$ in Equation~\eqref{eq-q}, it follows
that for some of the rows, $q_i$ may equal $1$ (just as in the proof of
Theorem~\ref{thm:preserve_subspace}).
Since $X_i=\Ex{X_i}$ for these rows,
$\sum_i \left( X_i - \Ex{X_i} \right)
   = \sum_{i:p_i <1} \left( X_i - \Ex{X_i} \right)$, and thus we will bound
this latter quantity with Equation~\eqref{eqn:uppertail}.
To do so, we must first provide a bound for $X_i-\Ex{X_i} \le X_i$ and for
$\sum_{i:p_i<1} \sigma_i^2 \le \sum_i \Ex{X_i^2}$.
To that end, note that:
\begin{align}
\nonumber
\abs{\row{\A}{i}(x_{\OPT}-\hat{x}_c)}
   & \leq \norm{\row{\U}{i}}_p \norm{\tau(x_{\OPT}-\hat{x}_c)}_q
    & \mbox{(by H\"{o}lders inequality)}\\
\nonumber
   & \leq \norm{\row{\U}{i}}_p \beta \norm{{\U}\tau(x_{\OPT}-\hat{x}_c)}_p
    & \mbox{(by Definition~\ref{def:good_basis}
             and Theorem~\ref{thm:good_basis})}\\
\nonumber
   & \leq \norm{\row{\U}{i}}_p \beta
          \left( \norm{{\A}x_{\OPT}-b} + \norm{{\A}\hat{x}_c-b} \right)
    & \mbox{(by the triangle inequality)}\\
\label{eqn1:lemma-perturb0pt}
   & \leq \norm{\row{\U}{i}}_p \beta 9 {\cal Z} \enspace ,
\end{align}
where the final inequality follows from the definition of ${\cal Z}$ and the
results from the first stage of sampling.
Next, note that from the conditions on the probabilities $q_i$ in
Equation~\eqref{eq-q}, as well as by Definition~\ref{def:good_basis} and the
output of the first-stage of sampling, it follows that
\begin{equation}
\label{eq-bothp}
   \frac{\abs{\hat{\rho}_i}^p}{q_i} \le \frac{\|\hat{\rho}\|^p}{r_2}
                                    \le \frac{8^p{\cal Z}^p}{r_2}
   \quad\text{and}\quad
   \frac{\norm{\row{{\U}}{i}}^p}{q_i} \le \frac{\mnorm{{\U}}^p}{r_2}
                                      \le \frac{\alpha^p}{r_2}
   \enspace,
\end{equation}
for all $i$ such that $q_i<1$.

Thus, since $X_i-\Ex{X_i} \leq X_i \leq \abs{\row{{\A}}{i}x_{\OPT}-b_i}^p/q_i$,
it follows that for all $i$ such that $q_i<1$,
\begin{align}
X_i -\Ex{X_i}
    & \leq \frac{2^{p-1}}{q_i}
           \left( \abs{\row{{\A}}{i} (x_{\OPT}-\hat{x}_c)}^p
                + \abs{\hat{\rho}_i}^p
           \right)
    & \mbox{(since } \hat{\rho} = {\A}\hat{x}_c - b \mbox{ )} \\
    & \leq 2^{p-1}
           \left( \frac{ \norm{\row{\U}{i}}^p_p \beta^p 9^p {\cal Z}^p }{q_i}
               +  \frac{ \abs{\hat{\rho}_i}^p }{q_i}
           \right)
      & \mbox{(by Equation~\eqref{eqn1:lemma-perturb0pt})} \nonumber \\
    & \leq 2^{p-1}
           \left( \alpha^p \beta^p 9^p {\cal Z}^p +  8^p {\cal Z}^p
           \right)/r_2
      & \mbox{(by Equation~\eqref{eq-bothp})} \nonumber \\
\label{eqn2:lemma-perturb0pt}
    & \leq c_p (\alpha\beta)^p {\cal Z}^p /r_2 \enspace ,
\end{align}
%% where we set $c = 17.1$ when $1\le p < 2$ and $c = 18$ when $p \ge 2$. 
where we set $c_p = 2^{p-1}(9^p + 8^p)$.
Thus, we may define $\Delta = c_p(\alpha\beta)^p{\cal Z}^p/r_2$. In
addition, it follows that
\begin{align}
\nonumber
\sum_{i:q_i<1} \Ex{X_i^2}
   & =\ \sum_{i:q_i<1}\abs{\row{{\A}}{i}x_{\OPT} -b_i}^p
                      \frac{\abs{\row{{\A}}{i}x_{\OPT} -b_i}^p}{q_i} \\
%%   & \le\ \frac{c_p(\alpha\beta)^p{\cal Z}^p}{r_2}
  & \le\ \Delta
          \sum_i\abs{\row{{\A}}{i} x_{\OPT} - b_i}^p \nonumber
    & \mbox{(by Equation~\eqref{eqn2:lemma-perturb0pt})} \\
  & \le\ c_p(\alpha\beta)^p{\cal Z}^{2p}/r_2 \label{eqn3:lemma-perturb0pt} \enspace .
\end{align}
To apply the upper tail bound of Equation~\eqref{eqn:uppertail} of
Theorem~\ref{thm:tail_bounds}, define
$\gamma = ((1+\epsilon)^p-1) {\cal Z}^p$.
We have $\gamma \ge p\epsilon\,{\cal Z}^{p}$, 
and since $\epsilon \le 1/7$, we also have $\gamma \le \left(\left(\frac87\right)^p - 1\right){\cal Z}^p$.
%%%MWM%%%and also that
%%%MWM%%%\begin{eqnarray*}
%%%MWM%%%2\sum_{i:p_i<1} \sigma_i^2 + 2\gamma\Delta/3
%%%MWM%%%   &\leq& 2\cdot 18^p(\alpha\beta)^{p}{\cal Z}^{2p}/r_2+ 2\cdot 18^p(\alpha\beta)^p{\cal Z}^{2p}((1+\eps)^p-1)/3r_2 \\
%%%MWM%%%   &\leq& XXX WHAT IS THIS ,
%%%MWM%%%\end{eqnarray*}
%%%MWM%%%where the second inequality follows by standard manipulations since
%%%MWM%%%$\epsilon \le 1$ and since $p \ge 1$.
%%%MWM%%%XXX.  FIX THOSE LAST TWO SENTENCES AND THE NEXT SENTENCE.
Hence, by Equation~\eqref{eqn:uppertail} of Theorem~\ref{thm:tail_bounds}, it
follows that
\begin{align*}
\ln\Pr{\norm{\Sb(\A x_{\OPT} - b)}^p > \norm{\A x_{\OPT} - b}^p + \gamma}
   & \leq \frac{-\gamma^2}
               {2\sum_{i:p_i<1} \sigma_i^2 + 2\gamma\Delta/3 } \\
   & \leq \frac{-p^2\epsilon^{2} r_2}{36^p(\alpha\beta)^p} \enspace .
\end{align*}
Thus, $\Pr{\norm{\Sb(\A x_{\OPT} - b)} > (1+\eps){\cal Z}} \le \exp
\left( \frac{-p^2\epsilon^{2} r_2}{36^p(\alpha\beta)^p}\right)$, from which
the lemma follows by our choice of $r_2$.
\end{proof}

\subsection{Proof of Lemma \ref{lemma-3opt}}

\begin{proof}
By two applications of the triangle inequality, it follows that
\begin{align*}
\norm{{\A}\hat{x}_{\OPT}-{\A}\hat{x}_c}
   & \leq \norm{\A \hat{x}_{\OPT} - {\A}x_{\OPT}}
        + \norm{{\A}x_{\OPT} - b}
        + \norm{{\A}\hat{x}_c- b}\\
   & \leq \norm{\A \hat{x}_{\OPT} - {\A}x_{\OPT}} + 9{\cal Z} \enspace ,
\end{align*}
where the second inequality follows since
$\norm{{\A}\hat{x}_c-b} \le 8\,{\cal Z}$ from the first stage of sampling and
since ${\cal Z}=\norm{{\A}x_{\OPT} - b}$.
In addition, we have that
\begin{align*}
\norm{\A x_{\OPT} - {\A}\hat{x}_{\OPT}}
   & \le\ \frac{1}{(1-\epsilon)}\norm{\Sb({\A}\hat{x}_{\OPT} - \A x_{\OPT})}
      & \mbox{(by Theorem~\ref{thm:preserve_subspace})} \\
   & \le\ (1+\epsilon)\left(\norm{\Sb({\A}\hat{x}_{\OPT} - b)} + \norm{\Sb(\A x_{\OPT} - b)}\right)
      & \mbox{(by the triangle inequality)}\\
   & \le\ 2(1+\epsilon)\norm{\Sb(\A x_{\OPT} - b)} \\
   & \le\ 2(1+\epsilon)^2\norm{\A x_{\OPT} - b}
      & \mbox{(by Lemma~\ref{lemma-perturbOpt})} \enspace ,
\end{align*}
where the third inequality follows since $\hat{x}_{\OPT}$ is optimal for the
sampled problem.
The lemma follows since $\epsilon \le 1/7$.
\end{proof}

\subsection{Proof of Lemma \ref{lemma-optnet2}}

\begin{proof}
%%%XXX.  CHECK DETAILS AND CONSTANTS.
Fix a given point $y^{*}_\varepsilon = \A x^{*}_\varepsilon \in B_\varepsilon$.
We will prove the contrapositive for this point, i.e., we will prove that if
$\norm{\A x^{*}_\varepsilon - b} > (1+6\epsilon){\cal Z}$, then
$\norm{\Sb(\A x^{*}_\varepsilon - b)} > (1+3\epsilon){\cal Z}$, with
probability at least $1-\frac{1}{100}\left(\frac{\epsilon}{36}\right)^d$.
The lemma will then follow from the union bound.

To this end, define the random variable
$X_i = (\Sb_{ii}\abs{\row{{\A}}{i} x^{*}_\varepsilon - b_i})^p$, and recall
that $\row{{\A}}{i} = \row{{\U}}{i}\tau $ since ${\A} = {\U}\tau$.
Clearly,
$\sum_{i=1}^n X_i = \norm{\Sb ({\A} x^{*}_\varepsilon-b)}^p$.
In addition, since $\Ex{X_i} = \abs{\row{{\A}}{i}x^{*}_\varepsilon-b_i}^p$, it
follows that $\sum_{i=1}^n \Ex{X_i} = \norm{ {\A} x^{*}_\varepsilon-b}^p$.
We will use Equation~\eqref{eqn:lowertail} of Theorem~\ref{thm:tail_bounds} to
provide an upper bound for the event that
$\norm{\Sb({\A}x^{*}_\varepsilon-b)}^p \le \norm{{\A}x^{*}_\varepsilon-b}^p-\gamma$,
where $\gamma = \norm{\A x^{*}_\varepsilon - b}^p - (1+3\epsilon)^p{\cal Z}^p$,
under the assumption that
$\norm{\A x^{*}_\varepsilon - b} > (1+6\epsilon){\cal Z}$.

From the definition of $q_i$ in Equation~\eqref{eq-q}, it follows
that for some of the rows, $q_i$ may equal $1$ (just as in the proof of
Theorem~\ref{thm:preserve_subspace}).
Since $X_i=\Ex{X_i}$ for these rows,
$\sum_i \left( X_i - \Ex{X_i} \right)
   = \sum_{i:p_i <1} \left( X_i - \Ex{X_i} \right)$, and thus we will bound
this latter quantity with Equation~\eqref{eqn:lowertail}.
To do so, we must first provide a bound for $\sum_{i:p_i <1} \Ex{X_i^2}$.
To that end, note that:
\begin{align}
\nonumber
\abs{\row{\A}{i}(x^{*}_\varepsilon-\hat{x}_c)}
   & \leq \norm{\row{\U}{i}}_p \norm{\tau(x^{*}_\varepsilon-\hat{x}_c)}_q
    & \mbox{(by H\"{o}lders inequality)}\\
\nonumber
   & \leq \norm{\row{\U}{i}}_p \beta \norm{{\U}\tau(x^{*}_\varepsilon-\hat{x}_c)}_p
    & \mbox{(by Definition~\ref{def:good_basis}
             and Theorem~\ref{thm:good_basis})}\\
\label{eqn1:lemma-optnet2}
   & \leq \norm{\row{\U}{i}} \beta 12 {\cal Z} \enspace ,
\end{align}
where the final inequality follows from the radius of the high-dimensional
ball in which the $\varepsilon$-net resides.
From this, we can show that
\begin{align}
\nonumber
\frac{ \abs{\row{\A}{i}x^{*}_\varepsilon-b_i} }{ q_i }
   & \leq \frac{2^{p-1}}{q_i}
          \left( \abs{\row{\A}{i}x^{*}_\varepsilon-\row{\A}{i}\hat{x}_c}^p +
                 \abs{\hat{\rho}_i}^p
          \right)
    & \mbox{(since } \hat{\rho} = {\A}\hat{x}_c - b \mbox{ )} \\
\nonumber
   & \leq 2^{p-1}
          \left( \frac{ \norm{\row{\U}{i}}^p 12^p \beta^p {\cal Z}^p }{q_i} +
                 \frac{\abs{\hat{\rho}_i}^p}{q_i}
          \right)
    & \mbox{(by Equation~\eqref{eqn1:lemma-optnet2})} \\
\nonumber
   & \leq 2^{p-1}
          \left( \alpha^p 12^p \beta^p {\cal Z}^p + 8^p {\cal Z}^p
          \right)/r_2
    & \mbox{(by Equation~\eqref{eq-bothp})} \\
\label{eqn3:lemma-optnet2}
   & \leq 24^p (\alpha\beta)^p {\cal Z}^p /r_2 \enspace .
\end{align}
Therefore, we have that
\begin{align}
\nonumber
\sum_{i:q_i<1} \Ex{X_i^2}
   & =\ \sum_{i:q_i<1}\abs{\row{{\A}}{i}x^{*}_\varepsilon -b_i}^p
                      \frac{\abs{\row{{\A}}{i}x^{*}_\varepsilon -b_i}^p}{q_i} \\
   & \le\ \frac{24^p(\alpha\beta)^p{\cal Z}^p}{r_2}
          \sum_i\abs{\row{{\A}}{i} x^{*}_\varepsilon - b_i}^p \nonumber
    & \mbox{(by Equation~\eqref{eqn3:lemma-optnet2})} \\
   & \le\ 24^p(\alpha\beta)^p
          \norm{\A x^{*}_\varepsilon - b}^{2p} /r_2 \label{eq-var3}.
\end{align}
To apply the lower tail bound of Equation~\eqref{eqn:lowertail} of
Theorem~\ref{thm:tail_bounds}, define
$\gamma = \norm{\A x^{*}_\varepsilon - b}^p - (1+3\epsilon)^p{\cal Z}^p$.
Thus, by Equation~\eqref{eq-var3} and by Equation~\eqref{eqn:lowertail} of
Theorem~\ref{thm:tail_bounds} it follows that
\begin{align*}
\ln\Pr{\norm{\Sb(\A x^{*}_\varepsilon - b)}^p \le
(1+3\epsilon)^p{\cal Z}^p} & \le\ \frac{-r_2(\norm{\A
x^{*}_\varepsilon - b}^p - (1+3\epsilon)^p{\cal Z}^p)^2}
                 {24^p(\alpha\beta)^p\norm{\A x^{*}_\varepsilon - b}^{2p}}\\
& \le\ \frac{-r_2}{24^p(\alpha\beta)^p}
\left(1-\frac{(1+3\epsilon)^p{\cal Z}^p}{\norm{\A x^{*}_\varepsilon - b}^p} \right)^2 \\
& <\ \frac{-r_2}{24^p(\alpha\beta)^p}
\left(1-\frac{(1+3\epsilon)^p{\cal Z}^p}{(1+6\epsilon)^p{\cal
Z}^p}\right)^2
  & \mbox{(by the premise)} \\
   & \le\ \frac{-r_2\epsilon^2}{24^p(\alpha\beta)^p}
  & \mbox{(since $\epsilon \le 1/3$).} 
\end{align*}
%%XXX.  I THINK WE LOST A FACTOR OF $2$ ABOVE.
Since $r_2 \ge
24^p(\alpha\beta)^{p}(d\ln(\frac{36}{\epsilon}) +
\ln(200))/\epsilon^2$, it follows that $\norm{\Sb(\A
x^{*}_\varepsilon - b)} \le (1+3\epsilon){\cal Z}$, with probability
no greater than $\frac{1}{200} \left( \frac{\epsilon}{36}
\right)^d$. Since there are no more than $ \left(\frac{36}{\epsilon}
\right)^d$ such points in the $\varepsilon$-net, the lemma follows
by the union bound.
\end{proof}

\subsection{Proof of Lemma \ref{lemma-nn}}

\begin{proof}
We will prove the contrapositive:
If $\norm{{\A}\hat{x}_{\OPT} - b} > (1+7\epsilon){\cal Z}$
then $\norm{\Sb({\A}\hat{x}_{\OPT} - b)} > (1+\epsilon){\cal Z}$.
Since ${\A}\hat{x}_{\OPT}$ lies in the ball $B$ defined by
Equation~\eqref{eq-affineball} and since the $\varepsilon$-net is constructed
in this ball, there exists a point $y_\varepsilon = \A x_\varepsilon$,
call it ${\A}x^{*}_\varepsilon$, such that
$\norm{{\A}\hat{x}_{\OPT}-{\A}x^{*}_\varepsilon} \le \epsilon\,{\cal Z}$.
Thus,
\begin{align*}
\norm{{\A}x^{*}_\varepsilon - b}
   & \ge\ \norm{{\A}\hat{x}_{\OPT} - b} - \norm{{\A}x^{*}_\varepsilon - {\A}\hat{x}_{\OPT}}
      & \mbox{(by the triangle inequality)} \\
   & \ge\ (1+7\epsilon){\cal Z} - \epsilon{\cal Z}
      & \mbox{(by assumption and the definition of } {\A}x^{*}_\varepsilon \mbox{ )}\\
   & =\ (1+6\epsilon){\cal Z} \enspace.
\end{align*}
Next, since Lemma~\ref{lemma-optnet2} holds for all points ${\A}x_\varepsilon$
in the $\varepsilon$-net, it follows that
\begin{equation}
\label{eqn1:lemma-nn}
\norm{\Sb({\A}x^{*}_\varepsilon - b)} \ >\ (1+3\epsilon){\cal Z} \enspace.
\end{equation}
Finally, note that
\begin{align*}
\norm{\Sb({\A}\hat{x}_{\OPT} - b)}
   & \ge\ \norm{\Sb({\A}x^{*}_\varepsilon - b)} - \norm{\Sb {\A}(x^{*}_\varepsilon - \hat{x}_{\OPT})}
      & \mbox{(by the triangle inequality)} \\
   & >\ (1+3\epsilon){\cal Z} - (1+\epsilon)\norm{{\A}(x^{*}_\varepsilon - \hat{x}_{\OPT})}
      & \mbox{(by Equation~\eqref{eqn1:lemma-nn} and Theorem~\ref{thm:preserve_subspace})} \\
   & >\ (1+3\epsilon){\cal Z} - (1+\epsilon)\epsilon\,{\cal Z}
      & \mbox{(by the definition of ${\A}\hat{x}_\varepsilon$)} \\
   & >\ (1+\epsilon){\cal Z} \enspace,
\end{align*}
which establishes the lemma.
\end{proof}

\end{document}